\newcommand{\mnote}[1]{\ifthenelse{\isodd{\value{page}}}{\marginpar{{\color{red}{$\Longleftarrow$}}\parbox{1in}{\color{red}\small #1}}}{\marginpar{\parbox{1in}{\color{red}\small #1}{\color{red}{$\Longrightarrow$}}}}}
\newcommand{\loccit}{\emph{loc.\,cit.}\xspace}
\newcommand{\minval}{\mathfrak{m}}
\newcommand{\bd}{\ensuremath{{\mathbf d}}}
\newcommand{\ext}{\operatorname{{Ext}}}
\newcommand{\stab}{\operatorname{STAB}}
\newcommand{\qstab}{\operatorname{QSTAB}}
\newcommand{\fract}{\operatorname{FRAC}}
\newcommand{\co}{\operatorname{co}}
\newcommand{\bS}{\ensuremath{{\mathbf S}}}
\newcommand{\fA}{\ensuremath{{\frak A}}}
\newcommand\half{\ensuremath\frac{1}{2}}
\newcommand{\wy}{\widetilde{\by}}
\newcommand{\cV}{\ensuremath{{\mathcal V}}}
\newcommand{\cE}{\ensuremath{{\mathcal E}}}
\newcommand{\ba}{\ensuremath{{\mathbf a}}}
\newcommand{\bn}{\ensuremath{{\mathbf n}}}
\newcommand{\bc}{\ensuremath{{\mathbf c}}}
\newcommand{\bone}{\ensuremath{{\mathbf 1}}}
\newcommand{\cS}{\ensuremath{{\mathcal S}}}
\newcommand{\bR}{\ensuremath{{\mathbf R}}}
\newcommand{\bt}{\ensuremath{{\mathbf t}}}
\newcommand{\be}{\ensuremath{{\mathbf e}}}
\newcommand{\by}{\ensuremath{{\mathbf y}}}
\newcommand{\cG}{\ensuremath{{\mathcal G}}}
\newcommand{\cC}{\ensuremath{{\mathcal C}}}
\newcommand{\bx}{\ensuremath{{\mathbf x}}}
\newcommand{\bE}{\ensuremath{{\mathbf E}}}
\newtheorem{theorem}{Theorem}[section]
\newtheorem{lemma}[theorem]{Lemma}
\newtheorem{definition}[theorem]{Definition}
\newtheorem{example}[theorem]{Example}
\begin{document}

\title{Bounds on Multiple Sensor Fusion}

%
\author{\IEEEauthorblockN{Bill Moran\IEEEauthorrefmark{1},
Fred Cohen\IEEEauthorrefmark{2},
Zengfu Wang\IEEEauthorrefmark{3},
Sofia Suvorova\IEEEauthorrefmark{4},
Douglas Cochran\IEEEauthorrefmark{5},
Tom Taylor\IEEEauthorrefmark{5},
Peter Farrell\IEEEauthorrefmark{4} and
Stephen Howard\IEEEauthorrefmark{6}} \\
\IEEEauthorblockA{\IEEEauthorrefmark{1}RMIT University, {bill.moran@rmit.edu.au}} \\
\IEEEauthorblockA{\IEEEauthorrefmark{2}University of Rochester, {fred.cohen@rochester.edu}}\\
\IEEEauthorblockA{\IEEEauthorrefmark{3}Northwestern Polytechnical University, {wangzengfu@gmail.com}} \\
\IEEEauthorblockA{\IEEEauthorrefmark{4}University of Melbourne, {ssuv, pfarrell@unimelb.edu.au}}\\
\IEEEauthorblockA{\IEEEauthorrefmark{5}Arizona State University, {cochran, tom.taylor@asu.edu}} \\
\IEEEauthorblockA{\IEEEauthorrefmark{6}Defence Science and Technology Organisation, {Stephen.Howard@dsto.defence.gov.au}}
}


\maketitle


\begin{abstract}
We consider the problem of fusing measurements from multiple
  sensors, where the sensing regions overlap and data are non-negative
  --- possibly resulting from a count of indistinguishable discrete
  entities. Because of overlaps, it is, in general, impossible to fuse
  this information to arrive at an accurate estimate of the overall
  amount or count of material present in the union of the sensing
  regions. Here we study the range of overall values consistent with the
  data. Posed as a linear programming problem, this leads to
  interesting questions associated with the geometry of the sensor
  regions, specifically, the arrangement of their non-empty
  intersections. We define a computational tool called the fusion
  polytope and derive a condition for this to be in the positive
  orthant thus simplifying calculations. We show that, in two
  dimensions, inflated tiling schemes based on rectangular regions
  fail to satisfy this condition, whereas inflated tiling schemes
  based on hexagons do.
\end{abstract}


\IEEEpeerreviewmaketitle

\section{Introduction}
Examples abound in sensing of measurement processes which, rather than
identifying objects or events, merely count them or measure their
size, for instance, by integrating a total response over all objects
accessible to each individual sensor. Such examples include
measurements of radioactivity using Geiger counters, people counting
algorithms in video-analytics that rely on some overall size of a
moving group rather than separate identification of each individual,
cell counting techniques, and counts of numbers of RF transmitters
using overall signal strength. In all of these circumstances,
measurement relies on a particular property of the object(s) being
measured.

At an abstract level, envisaged is a situation involving multiple
sensors each able to measure a different range of properties (such a
property might be an amount of some substance in a given spatial
region), and where the ranges of properties involved are not mutually
exclusive. Our initial interest was in the counting of spatially
distributed targets, and in this case the property is that of being in
a given ``sensor region'' described in terms of its geographical
spread.  The methods discussed here apply equally well to measurements
where the outcome is a real number, provided only that the quantities
being measured are non-negative, and to where the distinguishing
properties of the various sensors might be characteristics other than
physical location.  In fact, of course, it is enough that the
measurements have a known lower bound which in itself might be
negative, since they can be additively adjusted to provide
non-negative measurements in an obvious way.

As the results and ideas of this paper then, are very generic, our
methods will be clearer if we fix on the simple and, in some ways,
archetypal example that first motivated our interest.
This involves sensors on the ground capable of counting all objects
(``targets'') close to them in some sense. Each sensor is associated
with a ``sensor region'' within which any target present is
detected, without being identified, and forms part of the count for
that sensor. As already indicated, the particular
property of the objects is that they belong to this sensor region.
In this case,
the regions may be regarded, for simplicity, as subsets of
$\bR^{2}$. A more complex example might define the sensor region as
encompassing all transmissions that are both close to a sensor in
$\bR^2$ and emit in a certain frequency band; the regions in this case
are subsets of $\bR^{3}$. In more complex situations, targets may be
distinguished by their positions in space and by a number of other
features such as colour, emission frequency or energy, or rapidity of
movement~(in the case of radars measuring Doppler, for example). The
targets, then, can be regarded as points in a multi-dimensional space
$\bR^{n}$ and each sensor as defining a region of that space over
which it is able to detect and count targets or measure the total
integrated value of some response over that region.

An unrealistic aim would be to determine the total number of targets,
or to find the integral of all of the measured data, in the union of
all sensor regions. This is, of course, impossible because the regions
may overlap and we are not given information about the number of
targets/integrated measured values in the intersections of these
sensor regions. The focus question of this paper is merely to find the range
of possible values for the number of targets.

Another application area for the ideas and results presented here is
the assessment of the probability that at least one sensor of several
will see a given event. This kind of analysis is required if, for
instance, we are interested in obtaining a measure of performance for
the entire network of sensors: ``What is the probability that at least one sensor
will detect a target in the observed region?'' For such a question, each individual
sensor has a known probability of observing
the event $T$, say, and there are possibly unknown probabilities of
combinations of multiple sensors observing the event. Problems of this
kind are considered in \cite{Newell:2009:SCS:1817271.1817297}. An
interesting variant on this problem is explored in depth in
\cite{DBLP:conf/eccv/MittalD04,Lazos:2006:SCH:1167935.1167937,Wu:2007:EDS:1293372.1293563,Lazos:2007:PDM:1236360.1236426} where
detection of targets moving through an area~(two or three dimensional)~is studied.
In \loccit, sensors, each with
their own sensing region, are spread across the area of interest .
Targets move through the area in a linear motion and are detected with
a designated probability as they cross the region of a given sensor.
Since they cross multiple regions they may be detected more than once.
The overall probability of detection is required.

In all of those papers,  a version of the
inclusion-exclusion principle is employed to calculate an overall
probability of detection. This approach requires that some information
about the probabilities of multiple detections is known. For such
problems the results of this paper are able to provide a minimum
detectability performance consistent with the individual sensor
detection capability, with only minimal information about the joint
detection capabilities of multiple sensors. Specifically, this paper
will consider this minimum when we know which collections of detectors
are \emph{disjoint} that is, are incapable of detecting the same
event. This is a much less demanding requirement than that we know the
probabilities of multiple detections as required in the cited papers.

Connolly~\cite{Connolly85}, and
Liang~\cite{springerlink:10.1007/978-0-387-68372-0_6}  exploit the inclusion-exclusion principle to calculate the volumes of protein
molecules using NMR techniques. There again the ideas of this paper
might be used to provide a cruder assessment of the volume while
significantly reducing the number of measurements. Indeed our
techniques apply wherever the inclusion-exclusion principle could be
used if information about the counts for all intersections of regions  were available,
but instead only the information about the geometry of the  sensor regions is available.

To formulate the archetypal problem mathematically, we envisage a collection of
points (``targets'') $\bt_{1},\bt_{2},\bt_{3},\ldots,\bt_{m}$ in
$\bR^{n}$ and a collection of ``sensor regions''
$S_{1},S_{2},S_{3},\ldots,S_{R}$. At this stage, we impose no
structure on the sensor regions, other than that they are subsets of
$\bR^{n}$, perhaps with the additional proviso that they be Borel
measurable.

We emphasize at this point that,
while formulated in terms of counting targets, the same ideas apply to
all of the problems mentioned above, including the important case of
assessing probability of detection. Each target is assumed to belong
to the union $\bigcup_{r=1}^{R}S_{r}$ of the sensor regions. Whether
or not the sensors cover the region of interest is an issue not
discussed in this paper and we always assume that the region of
interest is covered by the sensors. Coverage problems of this type
have been considered by
Ghrist~\cite{ghrist08:_target_enumer_integ_planar_sensor_networ}. We finesse
this issue by always assuming that the target space is covered, or
rather that we are only interested in the region of observation
covered by at least one sensor. Of concern to us is that the sensor
regions may overlap so that a given target might be counted several times.
This problem, or variants of it, is discussed in \cite{Guo08,Huang07,Fang02,Aeron06,Prasad91} and many other papers.

Assumed known is the geometry of this situation; specifically, the
overlap regions $S_{r_{1}}\cap S_{r_{2}}\cap\cdots\cap S_{r_{T}}$ for
any set of $T$ distinct integers in the range $1\leq r_{t}\leq R$ is
known to be empty or non-empty.  It needs to be stressed that no
further information about the overlaps is known; in particular, it is
not known how many targets are in these overlaps. To be slightly more
specific, it is assumed that we know whether or not an overlap is
capable of containing a target of interest. This has to be specified a
little more precisely for some of our later results.

The structure of intersections can often be modelled by a
\emph{simplicial complex}, which approach we will discuss. This gives
rise to combinatorial problems, some of which have been addressed (see
Gr\"otschel and Lov\'{a}sz \cite{grot95} and Schrijver
\cite{Schrijver95}), at least in special cases. This setting also
admits direct application to a formula for the most likely number
which is analogous to a function defined on a simplicial complex. That
problem will be discussed in a sequel to this paper.

Each sensor reports its  measurement  to a central processor; thus
sensor $S_{r}$ reports $n_{r}$ ``targets''. The question at hand is ``how
many targets are there altogether?''; that is, how can we calculate
the overall  value  from these sensor reports? Of course, as the authors
of \cite{1505176} note (Theorem 1), it is trivial to see that there is
no unique answer to that question since we do not know how many
targets are in overlaps. The simple example of two overlapping sensor
regions with say $5$ targets reported by sensor $S_{1}$ and $7$
targets reported by sensor $S_{2}$ may have an overall  target count of
any number between $7$ and $12$ targets according to how many are in
the overlap region $S_{1}\cap S_{2}$. In more generality, the
inclusion-exclusion principle provides an answer to the overall  count
\emph{provided we know how many targets are in intersections of sensor regions}:
\begin{equation}
  \label{eq:inclusion-exclusion}
  |\bigcup_{r=1}^{R}S_r| =\sum_{r=1}^{R} |S_{r}|-\sum_{r_{1}\neq
    r_{2}} |S_{r_{1}}\cap S_{r_{2}}| + \sum_{r_{1}\neq
    r_{2}\neq r_{3}}|S_{r_{1}}\cap S_{r_{2}}\cap S_{r_{3}}|
  -\cdots+(-1)^{R-1}  |S_{1}\cap S_{2}\cap \cdots S_{R}|.
\end{equation}
Since the information about the number of targets in intersections is
typically unavailable, we cannot use this formula to calculate the overall number of
targets.

One might argue then that sensors should be chosen so that these sensor
regions do not overlap. As argued in \cite{1505176}, this is impractical for
several reasons. Sensor regions do not come in shapes that permit tiling of
Euclidean space or even the subset where targets might reside. Since we wish
to count all targets,  every  target should be in at least one sensor
region. In the situation where targets might be missed by an individual
sensor, it makes sense to have more and larger overlap rather than less.

The problem to be faced is to find the  range of possible values for the
number of targets from the information available: the sensor geometry
and the target count reports from individual sensors. This paper
addresses the problem of ascertaining the minimum number of targets
consistent with the target count reports (the maximum number is easily
calculated under our assumptions).

We will provide a description of this ``range of values'' problem in terms
of simplicial complexes and of linear programming. Reframing the
latter as a dual rather than primal domain enables us to describe, as
a computational tool, a polytope (the ``fusion polytope'') that is
dependent only on the geometry of the sensor region and independent of
the number of counts. In general the fusion polytope does not lie in
the positive orthant, but when it does lower bounds for the fused
information become much easier. Necessary and sufficient conditions
for the fusion polytope to lie in the positive orthant in terms of the
geometry of the sensor regions are given. For planar regions, we will provide a
description of some interesting sensor configurations that correspond to positivity
of the fusion polytope.

The theory will be illustrated with examples of simple cases in which
a description of the extreme points of the fusion polytope is
possible. The sensor configurations for which the simplicial complex
is a graph are discussed in some detail.
\section{Problem Formulation}\label{sec:2}
Our aim is to discuss the problem of counting of targets   by multiple
sensors. We recall that this is a surrogate for a large collection of
problems where target response is integrated over a sensor region, or
where the target might be the amount of some material being sensed and
so be non-negative real valued rather than non-negative integer
valued. Assume a collection of $R$ sensors, a \emph{sensor
  configuration} $\bS$, labelled by the regions they observe
$S_{1},S_{2},\ldots,S_{R}$, all subsets of $\bR^{n}$. While this
simple definition will suffice for much of this paper, later we will
need to be a little more stringent.

We assume the following properties of the collection of sensor regions:
\begin{description}
\item [\textit{Coverage}] \quad \quad That the union of the sensor regions
  $\bigcup_{r=1}^{R} S_{r}$ is the entire region of interest $\Omega$. Nothing
  escapes detection.
\item [\textit{Irredundancy}] \quad \quad \quad That there is no redundancy of sensors; that is, no
  sensor region is entirely contained in the union of the others.
\end{description}
While coverage is a fairly natural assumption; after all we are surely
only interested in the region that can be sensed, irredundancy is less
clear. In fact, it may well be unacceptable in some
applications. Nonetheless, it simplifies calculations and is not too
unreasonable.

At this stage, we remark that there are currently no topological
assumptions such as openness, closedness, or connectedness,  on the
sensor regions; they are merely sets with all of the potential
pathology that that entails. Later in the paper we shall impose some
topological restrictions which lead to interesting consequences.

A count made by each sensor is specified in a
\emph{sensor measurement  vector} $\bn=(n_1,n_2,\ldots, n_R)$.  In
general the measurements  are able to be non-negative real numbers without
changing the theory.
\begin{definition}
  \label{def:atom}
An \emph{atom} is a non-empty set of the form
\begin{equation}
  \label{eq:2aaa}
\langle i_{1},i_{2},\ldots,i_{T}\rangle= S_{i_{1}}\cap S_{i_{2}}\cap\cdots\cap S_{i_{T}}\cap S_{j_{1}}^{c}\cap
S_{j_{2}}^{c}\cap\cdots \cap S_{j_{S}}^{c},
\end{equation}
where ${}^{c}$ denotes set complement (with respect to the region of
interest $\Omega$),   and where $\{i_{1},i_{2},\ldots,i_{T},j_{1},j_{2},\ldots,j_{S}\}$
is an enumeration of the integers from $1$ to $R$ (without repetition of
course). The set of all atoms is denoted by $\fA(\bS)$.
\end{definition}
We recall that a Boolean algebra of sets is a collection of sets
closed under finite unions, intersections, and complements. Atoms are
minimal \emph{non-empty} elements of the Boolean algebra generated by
the sensor regions $S_{1},S_{2},\ldots, S_{R}$.  Note that, for any
atom, the number $T$ in \eqref{eq:2aaa} has to be positive, since the
intersection of all $S_{m}^{c}$ is empty because of the coverage
assumption. The whole space is the union of atoms, and these are
disjoint. The \emph{specification} of a sensor configuration $\bS$ is
really a statement of which intersections of the form \eqref{eq:2aaa}
are non-empty and which sensor regions these non-empty intersections
are contained in.

There is one further constraint that will require consideration.
\begin{definition} Given a sensor configuration $\bS=(S_1,S_2,\ldots,S_R)$,
if no non-empty intersection of sensor regions $\langle
i_{1},i_{2},\ldots,i_{R}\rangle$ is entirely contained in the union of different
sensor regions $\cup S_j$ where $j\not \in \{
i_{1},i_{2},\ldots,i_{N}\}$ then  the
sensor configuration is said to be \emph{generic}; otherwise the configuration is
\emph{degenerate}.
\end{definition}

The problem of finding the minimal overall values  differs significantly in the
degenerate case from the generic case.  Indeed the linear programming
formulation is considerably simpler for the generic
case. Unfortunately, there are many reasonable situations that are not
generic.  Examples (the simplest irredundant and a slightly more
complicated one) where the generic condition fails are given in Fig. \ref{fig:degeneracy},
but as we shall describe later, much more natural sensor
configurations can fail to be generic.
\begin{figure}[h!]
\centering
\subfigure[Simplest Degenerate Case: $S_{1}\cap S_{3}\subset
S_{2}$]{\includegraphics[width=0.35 \textwidth]{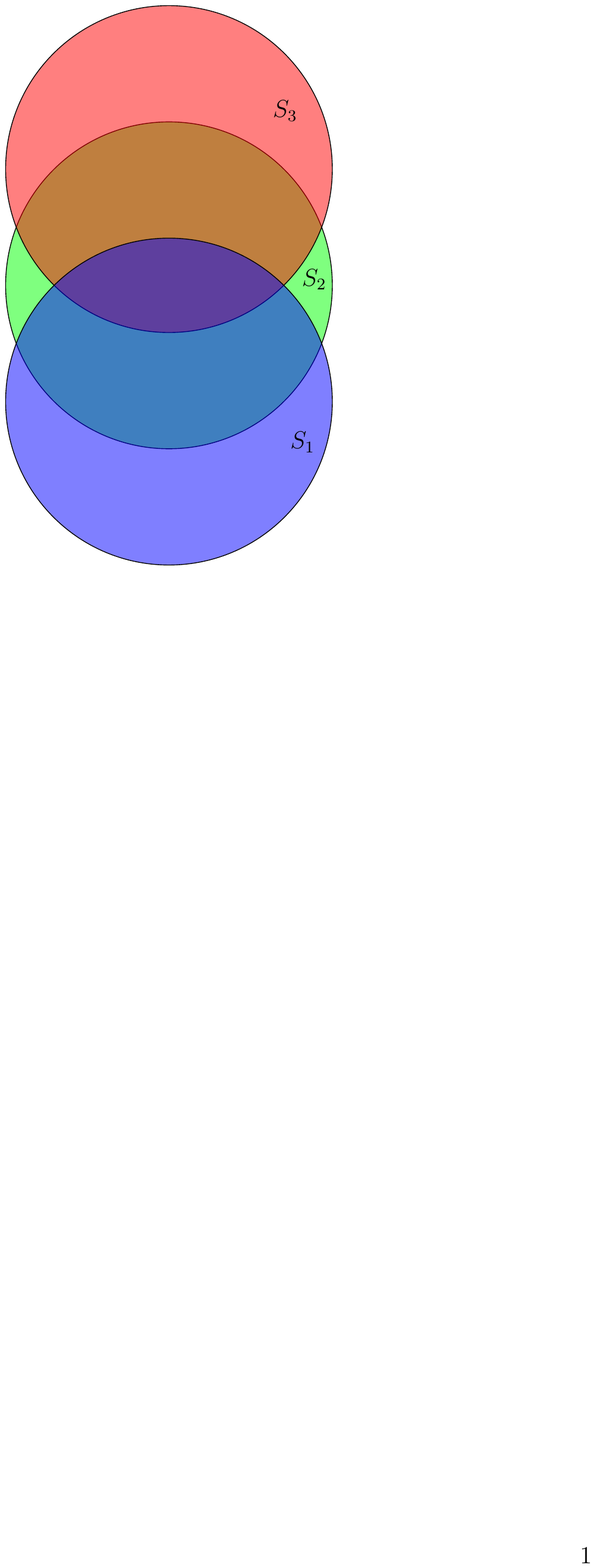}}
\subfigure[More Complicated Example: $S_{3}\cap S_{4}\cap S_{1}\subset S_{2}$]{ \includegraphics[width=0.4\textwidth]{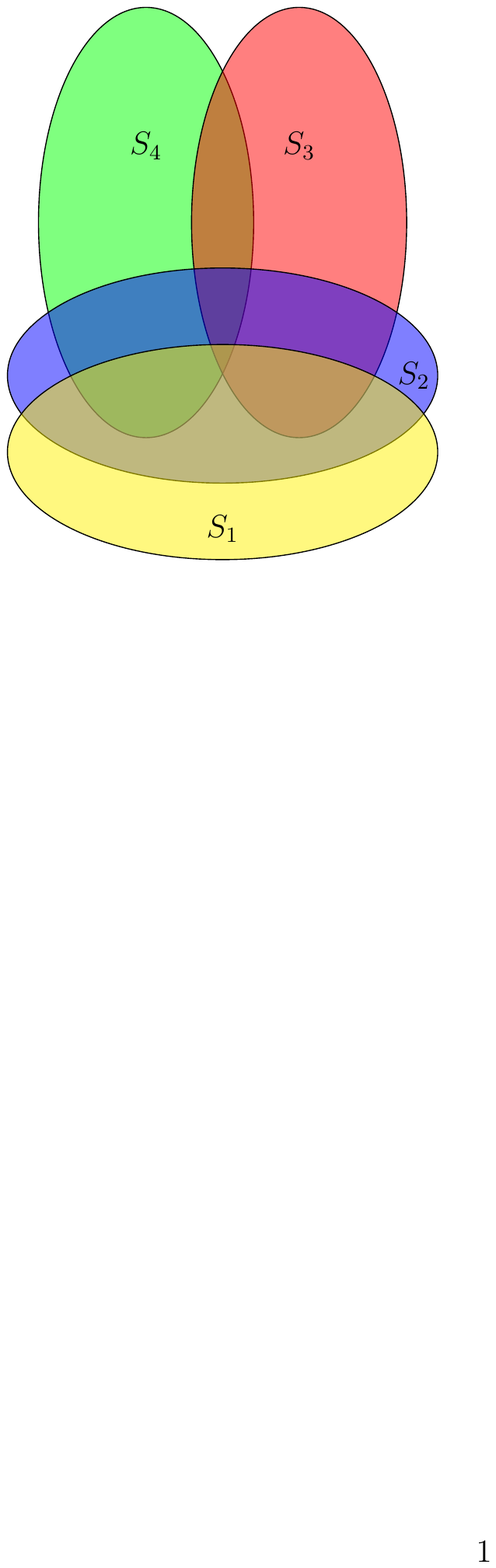}}
\caption{Simple cases of degeneracy}
  \label{fig:degeneracy}
\end{figure}

To be clear, we shall always assume coverage and irredundancy,
assumption of genericity is always stated explicitly.  Having defined
the basic structures, we reiterate the problem: given a sensor
configuration $\bS$ and a sensor measurement  vector $\bn$ our aim is to
investigate the problems of finding the range of possible overall
values consistent with $\bS$ and $\bn$.

\subsection{Simplicial Complex Formulation}

It is possible to provide a description of a sensor configuration in
terms of a geometrical object called a \emph{simplicial complex}. In
this context, a simplicial complex is defined to be a collection of
subsets $\Sigma$ of the set $\{1,2, \ldots,R\}$, where $R$ is the
number of sensors, with the property that if $\sigma\in \Sigma$ and
$\tau\subset \sigma$ then $\tau\in \Sigma$. All singletons $\{r\}$
$(r=1,2,\ldots,R)$ are assumed to be in $\Sigma$, as is the empty set
$\emptyset$. The \emph{dimension} of a \emph{simplex} $\sigma\in
\Sigma$ is just $\# \sigma -1$, where $\# A$ is the number of
elements of the set $A$. It is useful to think of points of
$\{1,2, \ldots,R\}$ (that is singletons) as \emph{vertices}, simplices
$\sigma$ of dimension $1$ as edges joining the elements of the set
$\sigma$, simplices $\sigma$ of dimension $2$ as triangles with
vertices the elements of $\sigma$, and so on. The resulting
geometrical object is called a \emph{geometrical realization} of the
simplicial complex.

The \emph{dimension} of a simplicial complex is the
maximum of the dimensions of all of its simplices.  The subsets of a simplex
$\sigma$ of dimension $\dim \sigma -1$ are called the \emph{faces}
of $\sigma$.

This abstract simplicial complex always has a geometrical
representation. The \emph{Geometric Realization
  Theorem} \cite{alexandroff35:_topol} states that a simplicial complex of
dimension $d$ has a geometric representation in $\bR^{2d+1}$, where
abstract simplices are represented by geometrical ones and where the
abstract concept of face corresponds to the geometrical faces of a
simplex. The geometrical picture is a very useful device in
visualising the structure of the problem we have described in
Section~\ref{sec:2},  and solutions to some special cases.

A given sensor configuration  $\bS=\{S_1,S_2,
\ldots, S_R\}$ then maps to a simplicial complex, called the
\emph{nerve} of the configuration as follows. The \emph{nerve} of
$\bS$ is the simplicial complex
$\Sigma(\bS)$ whose vertices are the numbers $\{1,2,\ldots, R\}$ and
where $\sigma\in \Sigma$ if $\bigcap_{r\in \sigma}S_r\neq\emptyset$.
It is straightforward to check that this is indeed a simplicial
complex. This definition is well-known; see for example
\cite{alexandroff35:_topol}.

The ability of the simplicial complex to represent the
intersection structure faithfully breaks down when the sensor
configuration is degenerate, as in Fig.~\ref{fig:degeneracy}. In the case of
Fig.~\ref{fig:degeneracy}(a), the nerve would consist of all
subsimplices of a triangle, but as can be seen there is no
distinguished region determined by intersections and differences of
sensor  regions that represents one of the $1$-simplices.  In effect $S_1\cap
S_2\cap S_3= S_1\cap S_2$ so that in the simplicial complex picture
two different simplices correspond to the same atom.

This simple example is characteristic, as the following simple theorem
shows.
\begin{theorem}
  \label{sec:5}
  Given a sensor configuration $\bS$,
  the correspondence which assigns an atom $\langle
  i_1,i_2,\ldots,i_t\rangle$  to the simplex $\sigma = \{i_1,
\ldots, i_t\}$ is $1-1$ from the atoms $\fA(\bS)$ of $\bS$ to (the
  simplices of) $\Sigma(\bS)$ if and only if the sensor configuration
  is generic.
\end{theorem}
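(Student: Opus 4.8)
The plan is to unwind the definitions and to notice that the asserted correspondence is injective for trivial reasons, so that the entire content of the theorem is the equivalence ``it is onto $\iff$ $\bS$ is generic''.

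First I would set up the map precisely. Let $\Phi$ denote the rule $\langle i_1,\ldots,i_t\rangle\mapsto\{i_1,\ldots,i_t\}$; equivalently, $\Phi$ sends an atom $A$ to its set of positive indices $I(A)=\{r:A\subseteq S_r\}$ (these two descriptions agree, since if $A=\langle i_1,\ldots,i_t\rangle$ then $A\subseteq S_{i_k}$ for every $k$, while $A\subseteq S_{j_l}^c$ forbids $A\subseteq S_{j_l}$). Two preliminary observations make $\Phi$ harmless. (i) $\Phi$ is well defined into $\Sigma(\bS)$: if $A=\langle i_1,\ldots,i_t\rangle\neq\emptyset$ then $\bigcap_{r\in I(A)}S_r\supseteq A\neq\emptyset$, so $I(A)$ is a simplex of the nerve. (ii) $\Phi$ is injective, regardless of genericity: since $A$ is a minimal nonempty element of the Boolean algebra generated by $S_1,\ldots,S_R$, for each $r$ one has $A\subseteq S_r$ or $A\subseteq S_r^c$, so $A$ is reconstructed from $I(A)$ as $A=\bigcap_{r\in I(A)}S_r\cap\bigcap_{r\notin I(A)}S_r^c$; hence distinct atoms have distinct images. (Coverage forces $I(A)\neq\emptyset$ for every atom, so ``the simplices of $\Sigma(\bS)$'' in the statement must be read as the \emph{nonempty} simplices; the empty simplex is never in the image of $\Phi$.) Consequently the theorem reduces to: $\Phi$ maps onto the nonempty simplices of $\Sigma(\bS)$ if and only if $\bS$ is generic.

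Next I would identify exactly which simplices are hit. For a nonempty $\sigma=\{i_1,\ldots,i_t\}\in\Sigma(\bS)$ put
\[
A_\sigma=\bigcap_{r\in\sigma}S_r\cap\bigcap_{j\notin\sigma}S_j^c=\Big(\bigcap_{r\in\sigma}S_r\Big)\setminus\Big(\bigcup_{j\notin\sigma}S_j\Big).
\]
If $A_\sigma\neq\emptyset$ then $A_\sigma$ is precisely the atom $\langle i_1,\ldots,i_t\rangle$ and $\Phi(A_\sigma)=\sigma$; conversely, if $\sigma=\Phi(A)$ for some atom $A$ then $A=A_\sigma$ by the reconstruction in (ii), so $A_\sigma\neq\emptyset$. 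Thus $\sigma$ lies in the image of $\Phi$ exactly when $A_\sigma\neq\emptyset$, i.e.\ exactly when $\bigcap_{r\in\sigma}S_r\not\subseteq\bigcup_{j\notin\sigma}S_j$.

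Finally I would compare with the definition of genericity. Reading the genericity condition with $\langle i_1,\ldots,i_t\rangle$ there as the intersection $\bigcap_{r\in\sigma}S_r$ over the index set $\sigma=\{i_1,\ldots,i_t\}$, $\bS$ is generic precisely when, for every index set $\sigma$ with $\bigcap_{r\in\sigma}S_r\neq\emptyset$ — i.e.\ for every nonempty $\sigma\in\Sigma(\bS)$ — one has $\bigcap_{r\in\sigma}S_r\not\subseteq\bigcup_{j\notin\sigma}S_j$. By the previous step this says exactly that every nonempty simplex of $\Sigma(\bS)$ lies in the image of $\Phi$, i.e.\ $\Phi$ is onto; together with the unconditional injectivity from (ii) this gives the claimed equivalence. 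Concretely, in the degenerate direction a witness index set $\sigma$ with $\bigcap_{r\in\sigma}S_r\neq\emptyset$ and $\bigcap_{r\in\sigma}S_r\subseteq\bigcup_{j\notin\sigma}S_j$ has $A_\sigma=\emptyset$, so the simplex $\sigma\in\Sigma(\bS)$ is missed by $\Phi$ (this is $\sigma=\{1,3\}$ in Fig.~\ref{fig:degeneracy}(a)). I do not expect a real obstacle here: the argument is pure definition-chasing, and the only subtle points are recognising that injectivity is automatic — so the real content is surjectivity — and the bookkeeping that quietly excludes the empty simplex from ``the simplices of $\Sigma(\bS)$''.
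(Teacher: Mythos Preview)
Your proof is correct and follows essentially the same route as the paper's --- both are pure definition-unpacking. Your version is considerably more careful: you separate out well-definedness, the (unconditional) injectivity, and then pin the entire content on surjectivity, which you match cleanly to the generic condition; you also flag the empty-simplex bookkeeping and the notational overload of $\langle i_1,\ldots,i_t\rangle$ in the paper's definition of generic. The paper compresses all of this into a couple of sentences and phrases the criterion as ``the intersections $S_{i_1}\cap\cdots\cap S_{i_t}$ are non-empty and distinct'', which is terser but less transparent than your formulation via $A_\sigma\neq\emptyset$.
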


\begin{proof}
  The assignment of
  \begin{enumerate}
  \item  a non-empty atom given by $S_{i_1}\cap \ldots \cap S_{i_t}$
    to the simplex $\sigma = \{i_1,\ldots, i_t\}$ and
  \item the empty intersection to $\emptyset$
  \end{enumerate}
  is a bijection of sets as long as the intersections $S_{i_1}\cap \ldots
  \cap S_{i_t}$ are non-empty and distinct with the single exception
  of the empty set. This  is exactly the definition of
    generic.
\end{proof}

\section{Calculation of the Range of Overall Values}

We suppose a sensor configuration $\bS=\{S_1,S_2,\ldots, S_R\}$ and
sensor measurements  $n_1,n_2,\ldots, n_R$. For each atom $\sigma=\langle
i_{1}, i_{2}, \ldots, i_{r}\rangle$, assume that the amount in that
intersection is $m_{\sigma}$. The aim is to calculate all possible
overall  amounts in the entire region, consistent with the reports from
each sensor region. Since atoms are disjoint, this is
$\sum_{\sigma\text{ atom}}m_\sigma$.  An upper bound is the sum of the
sensor measurements
$\sum_{r=1}^R n_r$. It is essentially trivial to see that, if the
collection of sensor regions is irredundant, this upper bound is
indeed the maximum possible. It is achieved by putting all of the
value in a sensor region $S_r$ into $S_r\backslash \bigcup_{r'\neq r}
S_{r'}$.

Before going any further, we formulate the linear programming version
of the problem. At this point, the formulation does not require the
generic hypothesis, though this will be important later. The
variables in the \emph{primal} linear programming problem are the
values in each atom. The constraints in this case are that the values
in the atoms that are subsets of any given sensor region have to sum
to the sensor measurement  for that sensor region, since any such  region is a disjoint
union of its atoms. The problem is then to minimize the sum of all of
the atom values $m_\sigma$. Formally, the constraints are
\begin{equation}
  \label{eq:lin_prog_primal}
  \sum_{\stackrel{a\subset S_r}{ a \text{ is an atom}}} m_a =n_r \ (r=1,2,\ldots, R)
\end{equation}
and of course that $m_a\geq 0$ for all atoms $a$. To calculate the minimum
atom sum consistent with the sensor measurements, it is necessary to find the minimum
value of $\sum_{ a \text{  atom}} m_a$ subject to the constraints in
\eqref{eq:lin_prog_primal}. More succinctly, we let $A=A_{\cS}$, with
the subscript dropped if the meaning is clear,  be the
matrix corresponding to the linear equations given in
\eqref{eq:lin_prog_primal}; that is the $\sigma, r$ entry of $A_{\cS}$ is $1$ if
the atom $\sigma$ is in the sensor region $S_r$ and $0$ otherwise. The sensor measurement
vector is $\bn=(n_1,n_2,\ldots, n_R)$ and the atom value vector is
$\mathbf{m} =(m_a)$. Then the linear programming formulation gives that the minimal
value for this sensor configuration $\bS$ and sensor measurement vector $\bn$ is
\begin{equation}
  \label{eq:lin_prog_mat}
\minval(\bS,\bn)=\min\{\bone \cdot \mathbf{m}:  A \mathbf{m} = \bn, \quad \mathbf{m} \geq 0\},
\end{equation}
where $\cdot$ indicates the dot product of the two vectors, so that $\bone
. \mathbf{m}$ is the sum of the entries in $\mathbf{m}$.

The following theorem then addresses the range of values.
\begin{theorem}
  \label{thm:range} Given a sensor configuration $\bS=(S_1,S_2,\ldots,
  S_R)$ and sensor measurements  $\bn=(n_1,n_2,\ldots, n_R)$, every (real) value
  between the maximum and minimum overall  value  is achievable.
\end{theorem}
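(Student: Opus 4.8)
The plan is to reduce the statement to convexity of the feasible polytope in \eqref{eq:lin_prog_mat} together with linearity of the overall-value functional. Write $P=\{\mathbf{m}:A\mathbf{m}=\bn,\ \mathbf{m}\geq 0\}$ for the feasible set. First I would note that $P$ is non-empty — the assignment placing $n_r$ in the atom $S_r\backslash\bigcup_{r'\neq r}S_{r'}$ (non-empty by irredundancy) and $0$ in every other atom lies in $P$ — and that $P$ is convex, being the intersection of the affine subspace $\{A\mathbf{m}=\bn\}$ with the non-negative orthant.

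Second, I would confirm that the extreme overall values are attained on $P$. For the maximum $M:=\sum_{r=1}^R n_r$ this is already recorded above: the assignment just described attains it. For the minimum, the functional $\mathbf{m}\mapsto\bone\cdot\mathbf{m}$ is bounded below by $0$ on $P$ and $P$ is closed, so the linear program \eqref{eq:lin_prog_mat} has an optimal solution; fix a minimizer $\mathbf{m}_{\min}$ with $\bone\cdot\mathbf{m}_{\min}=\minval(\bS,\bn)$ and a maximizer $\mathbf{m}_{\max}$ with $\bone\cdot\mathbf{m}_{\max}=M$.

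The core of the argument is then a one-parameter interpolation between these two assignments. For $t\in[0,1]$ put $\mathbf{m}_t=(1-t)\,\mathbf{m}_{\min}+t\,\mathbf{m}_{\max}$. By convexity of $P$ each $\mathbf{m}_t$ is feasible: it is non-negative and $A\mathbf{m}_t=(1-t)\bn+t\bn=\bn$, so $\mathbf{m}_t$ corresponds to a genuine atom-assignment consistent with $\bS$ and $\bn$. Since $\bone\cdot(\,\cdot\,)$ is linear, $\bone\cdot\mathbf{m}_t=(1-t)\,\minval(\bS,\bn)+t\,M$, and as $t$ ranges over $[0,1]$ this takes every value in the interval $[\minval(\bS,\bn),\,M]$. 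Hence every real value between the minimum and the maximum overall value is the overall value of some feasible assignment, which is the assertion.

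I do not expect a genuine obstacle here; the only point needing a word of justification is that the infimum in \eqref{eq:lin_prog_mat} is attained, so that an actual minimizing assignment exists to interpolate from, and this follows from closedness of $P$ together with boundedness below of the objective (alternatively, from the fact that $P$ is in fact bounded, since $0\leq m_a\leq n_r$ whenever $a\subset S_r$). If one preferred to bypass attainment of the minimum altogether, one could argue purely topologically: $P$ is convex, hence connected; $\mathbf{m}\mapsto\bone\cdot\mathbf{m}$ is continuous; so the set of achievable overall values is a connected subset of $\mathbb{R}$, i.e.\ an interval, with the stated endpoints.
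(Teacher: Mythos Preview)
Your argument is correct and follows essentially the same route as the paper: the paper observes that the feasible set $P=\{\mathbf m:A\mathbf m=\bn,\ \mathbf m\geq 0\}$ is a bounded convex polytope, hence compact and connected, and invokes the intermediate value theorem for the continuous functional $\mathbf m\mapsto\bone\cdot\mathbf m$. Your explicit convex interpolation between a minimizer and a maximizer is a slightly more constructive rendering of the same idea (and you in fact give the paper's version verbatim as your alternative at the end); your additional remarks on nonemptiness and attainment of the minimum are correct and fill in details the paper leaves implicit.
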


\begin{proof}
  The function $$f(\mathbf{m}) = \sum_{a\text{ atom}} m_a$$ is continous as well as
  defined on a bounded, convex polytope which is compact, and
  connected.  This function $f(\mathbf{m})$ then satisfies the
  intermediate value theorem.
\end{proof}

The integer case is more difficult (and interesting). We have the
following theorem.
\begin{theorem}
  \label{sec:3}
 If
  $\bS$ is generic, then for every set of sensor managements that are
  integers, every
  integer value between the minimum overall (integer) count and maximum overall
  count  is  achievable by  integer assignments to  atoms.
\end{theorem}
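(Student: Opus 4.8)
The plan is to exploit the fact that, for a \emph{generic} configuration, Theorem~\ref{sec:5} gives a bijection between atoms and simplices of $\Sigma(\bS)$, so the constraint matrix $A$ is exactly the (transpose of the) simplex-versus-vertex incidence matrix of the nerve. Concretely, $A\mathbf{m}=\bn$ says that for each vertex $r$, the sum of $m_\sigma$ over simplices $\sigma\ni r$ equals $n_r$. I would first reduce the statement to a connectedness claim: by Theorem~\ref{thm:range} the attainable real values form the full interval $[\minval(\bS,\bn),\sum_r n_r]$, so it suffices to show that among integer points of the polytope $P=\{\mathbf{m}\ge 0 : A\mathbf{m}=\bn\}$ the values of $f(\mathbf{m})=\bone\cdot\mathbf{m}$ hit every integer in that range. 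For this it is enough to produce, for any integer point $\mathbf{m}\in P$ with $f(\mathbf{m})$ not minimal, another integer point $\mathbf{m}'\in P$ with $f(\mathbf{m}')=f(\mathbf{m})-1$; iterating from the integer maximizer (all mass pushed onto the vertex atoms $S_r\setminus\bigcup_{r'\ne r}S_{r'}$, which is an integer point of value $\sum_r n_r$) down to the minimum then yields every intermediate integer.

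The core step is the single-unit descent. Given an integer $\mathbf{m}\in P$ that is not a minimizer of $f$, I want to find a simplex $\tau$ with $\#\tau\ge 2$ and $m_\tau\ge 1$, and a way to move one unit off $\tau$ onto strictly smaller simplices while preserving all vertex-sums and nonnegativity; since the replacements sit on lower-dimensional simplices, each such move strictly decreases $\bone\cdot\mathbf{m}$. The natural move is: pick an edge $\{i,j\}\subseteq\tau$ (possible because $\#\tau\ge2$), decrement $m_\tau$ by $1$, and increment by $1$ the two atoms $\langle i\rangle$ and $\langle j\rangle$ corresponding to the singletons $\{i\}$ and $\{j\}$ — wait, that changes vertex-sums at $i$ and $j$ by $+1$ while $\tau$'s removal changed them by $-1$, so those balance, but every other vertex $k\in\tau\setminus\{i,j\}$ loses a unit and is not compensated. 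So the honest move must compensate at all vertices of $\tau$: decrement $m_\tau$ and distribute the unit among faces of $\tau$ so that every vertex of $\tau$ is still covered. The clean choice is to move the unit from $\tau$ onto two complementary faces $\sigma_1,\sigma_2$ with $\sigma_1\cup\sigma_2=\tau$ and $\sigma_1\cap\sigma_2=\emptyset$ (e.g. $\sigma_1=\{i\}$ and $\sigma_2=\tau\setminus\{i\}$): decrement $m_\tau$ by $1$, increment $m_{\sigma_1}$ and $m_{\sigma_2}$ by $1$. Then at every vertex $k\in\tau$ exactly one of $\sigma_1,\sigma_2$ contains $k$, so the vertex-sum is unchanged; at vertices outside $\tau$ nothing changes; nonnegativity is preserved since we only increment except at $\tau$ where $m_\tau\ge1$; and $f$ drops by $1$ because one unit on a simplex of size $t$ is replaced by one unit each on simplices of sizes $1$ and $t-1$ but $\bone\cdot\mathbf{m}$ counts a unit on $\tau$ once and now counts two units, so in fact $f$ \emph{increases} by $1$. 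I have the direction backwards: such a split is an ascent, not a descent. The descent must instead \emph{merge} two disjoint simplices into their union, which is legal precisely when that union is itself a simplex of $\Sigma(\bS)$. So the real descent move is: find disjoint simplices $\sigma_1,\sigma_2\in\Sigma(\bS)$ with $m_{\sigma_1},m_{\sigma_2}\ge1$ and $\sigma_1\cup\sigma_2\in\Sigma(\bS)$, then set $m_{\sigma_1}\!\mathrel{-}=1$, $m_{\sigma_2}\!\mathrel{-}=1$, $m_{\sigma_1\cup\sigma_2}\!\mathrel{+}=1$; this preserves $A\mathbf{m}=\bn$ and nonnegativity and decreases $f$ by $1$.

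The crux — and the step I expect to be the main obstacle — is showing that whenever $\mathbf{m}$ is integral and $f(\mathbf{m})>\minval(\bS,\bn)$ such a mergeable pair exists. The argument I would run: if no merge is available, I claim $\mathbf{m}$ is already a minimizer. Suppose $\mathbf{m}$ is not minimal; compare it with an integer minimizer $\mathbf{m}^\ast$ (which exists because $A$ is totally unimodular — it is an incidence matrix of the kind arising here, and I would verify this, or argue directly via the polytope's integrality in the generic case) and with $\bn$ fixed. The difference $\mathbf{m}-\mathbf{m}^\ast$ lies in $\ker A$ and has $\bone\cdot(\mathbf{m}-\mathbf{m}^\ast)>0$; I would decompose this difference into a signed combination of elementary ``exchange'' cycles in the bipartite simplex/vertex incidence structure and show each such cycle, when it raises $f$, can be realized by a sequence of legal merges — here the genericity hypothesis is essential, because it is exactly what guarantees that the combinatorial faces $\sigma_1\cup\sigma_2$ that appear actually correspond to nonempty atoms and hence are available as variables. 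In the degenerate case such a union may fail to be in $\Sigma(\bS)$, the merge is blocked, and the theorem can fail. I would be careful to present this existence-of-a-merge lemma cleanly, since it carries all the real content; the rest is the routine induction on $f(\mathbf{m})-\minval(\bS,\bn)$ sketched above.
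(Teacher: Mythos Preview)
Your descent strategy has a genuine gap: the key lemma you isolate --- that every non-minimal integer feasible point admits a mergeable pair $\sigma_1,\sigma_2$ with $\sigma_1\cup\sigma_2\in\Sigma(\bS)$ --- is \emph{false}. Take the nerve to be the $5$-cycle $C_5$ (vertices $1,\dots,5$, edges only), and set $n_i=2$ for all $i$. The integer minimum is $5$ (achieved by putting one unit on each edge). But the integer point $m_{\{1\}}=m_{\{2,3\}}=m_{\{4,5\}}=2$, all else zero, is feasible with $f(\mathbf{m})=6>5$, yet no two of its supporting simplices have union in $\Sigma$, so no merge is available. Your proposed cycle-decomposition rescue cannot repair this, and the side claim that $A$ is totally unimodular is also wrong: with three sensor regions and pairwise-only overlaps, the three edge-columns give a $3\times 3$ submatrix of determinant $\pm 2$ (consistent with the paper's later observation that the LP minimum $(n_1+n_2+n_3)/2$ need not be an integer).

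The irony is that you already wrote down the move the paper uses and then discarded it. The paper argues by \emph{ascent}: from any achievable integer $k$ strictly below the maximum, some atom $\tau$ with $|\tau|\ge 2$ carries a unit; pick $i\in\tau$, decrement $m_\tau$, and increment $m_{\tau\setminus\{i\}}$ and $m_{\{i\}}$. Genericity guarantees $\tau\setminus\{i\}$ is an atom, irredundancy guarantees $\{i\}$ is, and $f$ goes up by exactly $1$. Starting from the integer minimum and iterating upward hits every integer up to $\sum_r n_r$. The asymmetry is the whole point: splits are always legal in the generic case because faces of simplices are simplices, whereas merges require unions to be simplices, which the complex need not provide. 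Reverse the direction of your induction and the proof becomes a paragraph.
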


\begin{proof}
It is enough to observe  that if $\bn$ is the sensor measurement vector and $k$ is
  a possible achievable integer value for the overall count, then so
  is $k+1$ unless $k$ is the maximum possible count. Suppose first
  that there is no target in an intersection $S_i\cap S_j$ with $i\neq
  j$. Then $k$ is the maximum possible count. Otherwise, suppose that
  a target $t$ is in $S_i\cap S_{j_1}\cap\cdots \cap S_{j_r}$, and in
  no other sensor regions.

  This target is moved into $\bigl(S_{j_1}\cap S_{j_2}\cap \cdots\cap
  S_{j_r}\bigr)\backslash S_i$, which is possible because of
  genericity. Then insert another target in $S_i\backslash \bigcup_{i=1}^r
  S_{j_i}$. This keeps the sensor measurement  for every sensor region
  unchanged but increases the overall value  by $1$. Moreover, it is
  clear that integers are assigned to atoms by this proof.
\end{proof}

We note that the genericity condition is required here, though a
somewhat weaker one would suffice. The first configuration in
Figure~\ref{fig:degeneracy} fails to be generic but still has the
property that the range of integer overall  values   forms an interval for
any sensor measurement  vector. On the other hand, consider the sensor
configuration in Figure~\ref{fig:noninterval}.
\begin{figure}[h]
  \centering
\includegraphics[width=0.3\textwidth]{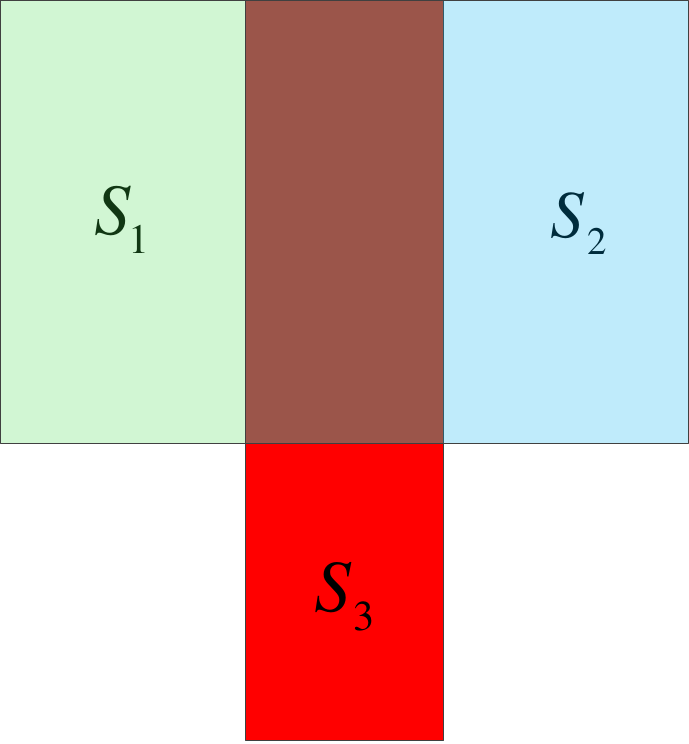}
    \caption{Example where the integer values do not form an interval.}
  \label{fig:noninterval}
\end{figure}
If each sensor region has a count of $1$, then $3$ is clearly the
maximum count and $1$ is the minimum, but there is no achievable
overall count of $2$ with integer sensor measurements.

It is clear that the maximum overall count associated with integer
measurements is the same as the maximum real value associated with
those measurements since this is just the sum of the individual sensor
counts. On the other hand, this is clearly not the case  for the
minimum overall value. To see this, consider the case of three sensor
regions $S_1, S_2, S_3$ with only pairwise non-empty intersections:
$S_i\cap S_j\neq \emptyset$ for all $i,j=1,2,3$. In this case, as we
shall see later, the minimum overall value is $\frac{n_1+n_2+n_3}{2}$
where $n_i$ is the  count in region $S_i$. As this is not always an
integer it cannot be achieved by integer atom values. One might ask
whether the smallest integer greater than or equal to this minimum
value is achievable by integer atom counts. We have been unable to
definitively answer this question, though our experiments suggest that
it is always true for generic sensor configurations.

In principle  \eqref{eq:lin_prog_mat} provides a mechanism for
calculation of the minimum, and therefore by Theorem~\ref{thm:range} the
entire range of possible values, but  for large numbers of sensors it
becomes impractical. Moreover it suffers the problem that, whenever the sensor
value changes, an entirely  new calculation is needed.  What is required is a formulation
of the problem that provides a simple machine to go from sensor measuremens to
minimal overall  value. This machine should itself be only dependent on the geometry of
the sensor regions, with only the inputs of sensor measurements  changing. In
other words, we would prefer a computable formula into  which
insertion of the  sensor measurements  presents the value of (\ref{eq:lin_prog_mat}).

Progress towards such  a solution, is obtained via  the dual linear programming
problem. Since the formulation in \eqref{eq:lin_prog_mat} is in standard form,
the dual problem is (see  \cite{dantzig98:_linear_progr_exten}, p.128) expressed in
terms of dual variables $\by=(y_r)$ indexed by sensor regions and states
\begin{equation}
  \label{eq:dual_prob}
  \max\{\bn\cdot\by:A^T\by\leq \bone\}.
\end{equation}
The key difference between this and the usual dual-primal formulation
(see \cite{dantzig98:_linear_progr_exten}) is that, because
the primal is stated in terms of equalities, the dual variables have no
restrictions other than the linear inequality $A^T\by\leq \bone$; in
particular, $\by$ is not required to be non-negative.  As a result the feasible
region (the convex set described by the constraints) is not, in general, compact. Existence
of solutions therefore becomes an issue.

As an illustration of the problems that can arise consider the case given in
Fig.~\ref{fig:degeneracy}(a). The matrix $A$ is obtained as
\begin{equation}
  \label{eq:a-degenerate}
  A=
  \begin{pmatrix}
    1&0&0&1&0&1\\
    0&1&0&0&1&1\\
    0&0&1&1&1&1\\
  \end{pmatrix},
\end{equation}
and the dual constraint becomes
\begin{equation}
  \label{eq:dual-a}
  A^T\by\leq 1.
\end{equation}
This has a solution $(1,1,-1)$ which gives the maximum for  $n_1=1,
n_2=1, n_3=1$ (and indeed for any values for which $n_3\leq \min(n_1,n_2)$). It
is easy to see that this is an extreme point of the convex (non-compact)
polytope specified by \eqref{eq:dual-a}.

Generally, having to deal with a dual polytope that is not non-negative causes
problems and leads to complications in calculating the minimum value.
However, many of the potential issues disappear because of the specific nature
of the problem. The feasible region is always non-empty:
$(\frac{1}{R},\frac{1}{R},\frac{1}{R},\ldots,\frac{1}{R})$ is always in it for
example, and the region is bounded above: $y_r\leq 1$ for all $r$.  It follows
by \cite{dantzig98:_linear_progr_exten} that there is a solution of the dual problem
and it equals the solution of the primal. It follows too that the minimum
value is exactly the solution of the dual problem, and that once the extreme
points of the dual polytope are found it is a simple matter to take their
inner products with the sensor measurement  vector $\bn$ and find its maximum to
calculate the minimum count. This means that the polytope, or rather its set
of extreme points, provides the computational machine needed. We call this
polytope the \emph{fusion polytope} for the given sensor configuration and
denote it by $\cC(\bS)$. It is, obviously, only dependent on the sensor
configuration. We state this key result as a theorem.
\begin{theorem}
  \label{thm:dual_result}
Given a sensor configuration $\bS$, with fusion polytope $\cC(\bS)$, for any
sensor measurement  vector $\bn$,
\begin{equation}
  \label{eq:adlnfnajnd}
  \minval(\bS,\bn)=\max\{ \bn\cdot\be: \be \text{ is an extreme point of $\cC(\bS)$}\}.
\end{equation}
\end{theorem}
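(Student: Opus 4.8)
The plan is to read the identity off from linear-programming duality, the one subtlety being that the dual feasible set $\cC(\bS)$ is in general unbounded (as stressed around Fig.~\ref{fig:degeneracy}(a)), so the ``continuous function on a compact set'' argument used for Theorem~\ref{thm:range} is not available; instead I would show that $\cC(\bS)$ is a \emph{pointed} polyhedron and invoke the fact that a linear functional bounded above on such a set attains its maximum at an extreme point.

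First I would check that the primal program \eqref{eq:lin_prog_mat} has a finite optimum. It is bounded below by $0$ thanks to the constraint $\mathbf{m}\geq 0$, so it suffices to exhibit a feasible point. Irredundancy guarantees that for each $r$ the set $\langle r\rangle=S_r\cap\bigcap_{r'\neq r}S_{r'}^{c}$ is non-empty, hence is an atom; the assignment $m_{\langle r\rangle}=n_r$ for these $R$ atoms and $m_a=0$ for every other atom satisfies $\mathbf{m}\geq 0$ (recall $\bn\geq 0$) and $A\mathbf{m}=\bn$, because for each $r$ the only non-zero atom contained in $S_r$ is $\langle r\rangle$ itself. Hence $\minval(\bS,\bn)$ is well defined and finite. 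Then, since \eqref{eq:lin_prog_mat} is in standard form, is feasible and bounded, and its dual \eqref{eq:dual_prob} is also feasible (for instance $(\tfrac1R,\dots,\tfrac1R)\in\cC(\bS)$), the duality theorem for linear programs \cite{dantzig98:_linear_progr_exten} applies: the dual attains its maximum and $\minval(\bS,\bn)=\max\{\bn\cdot\by:\by\in\cC(\bS)\}$.

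It then remains to replace ``$\max$ over $\cC(\bS)$'' by ``$\max$ over extreme points of $\cC(\bS)$''. Here I would note that the row of $A^{T}\by\leq\bone$ indexed by the atom $\langle r\rangle$ is exactly the inequality $y_r\leq 1$, so $\cC(\bS)\subseteq\{\by:y_r\leq 1\text{ for all }r\}$. Consequently $\cC(\bS)$ contains no line: if $\by_0+t\bd\in\cC(\bS)$ for every $t\in\bR$, then each affine function $t\mapsto(y_0)_r+td_r$ must stay $\leq 1$, forcing $\bd=\bzero$. A non-empty polyhedron containing no line has extreme points, and the set of maximizers of $\by\mapsto\bn\cdot\by$ over $\cC(\bS)$ — non-empty by the previous paragraph — is a face of $\cC(\bS)$, itself a line-free polyhedron and therefore possessing an extreme point, which is an extreme point of $\cC(\bS)$ as well. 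Hence the maximum is attained at an extreme point of $\cC(\bS)$, which is precisely \eqref{eq:adlnfnajnd}.

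I expect the main obstacle to be exactly this last step: dealing honestly with the non-compactness of $\cC(\bS)$. Everything before it is the duality bookkeeping already assembled in the surrounding text (feasibility, boundedness, and the existence statement quoted from \cite{dantzig98:_linear_progr_exten}); the one place where the geometry of the sensor configuration genuinely enters is the use of irredundancy — via the atoms $\langle r\rangle$ — both to produce a primal feasible point and to force $\cC(\bS)$ to be pointed, so that a vertex maximizer is guaranteed to exist.
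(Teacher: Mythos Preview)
Your proposal is correct and follows essentially the same route as the paper: the discussion preceding the theorem already observes that the dual feasible region is non-empty (via $(\tfrac1R,\ldots,\tfrac1R)$), is bounded above coordinatewise by $y_r\leq 1$, and then appeals to \cite{dantzig98:_linear_progr_exten} for strong duality. Your write-up is more careful on two points the paper leaves implicit --- the explicit primal feasible point built from the irredundancy atoms $\langle r\rangle$, and the pointed-polyhedron argument guaranteeing that the (attained) maximum occurs at an extreme point --- but the underlying strategy is the same.
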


Naturally, in the case of a counting, rather than a continuum problem,
the solution needs to be an integer, but the smallest integer greater
than the actual minimal real solution provides the minimum in that
case.  We reiterate that, by passage to the dual linear programming
problem, the methods to calculate minimum sensor measurements is reduced to
testing certain ``universal'' vertices against the sensor readings. Once
the extreme points of the fusion polytope are identified, a formula
for the minimum that involves sensor readings as variables is
immediate. Indeed there are some extreme points that are unnecessary
for this calculation: for instance if $\be$ and $\be'$ are extreme
points and $e_{i}'\geq e_{i}$ for all $i$ then there is no need to
include $\be$ in the maximization in (\ref{eq:adlnfnajnd}). We say
that $\bE$ is a \emph{dominant} extreme point if there is no such
$\be'$. Evidently it is only necessary to list all dominant extreme
points for the purposes of (\ref{eq:adlnfnajnd}).

Despite the existence of solutions to the dual problem, the
calculation of (dominant) extreme points of the fusion polytope
appears hard in general, and we currently have no solution that
applies across most situations.  Even in some simple cases resulting
from graphical models, studied by~\cite{Bruce2003Book}, where the
polytope is given by $x_i\geq 0$ and $\Sigma_{i\in Q} x_i \leq 1$ over
cliques $Q$, maximization of a linear objective function over
$\qstab(G)$ (see Section~\ref{sec:examples} for the definition) is
known to be NP-complete.

It turns out, somewhat surprisingly,  that the issue of  the fusion
polytope not residing in the positive orthant is exactly that of non-genericity.
\begin{theorem} \label{thm:generic}
Let the sensor configuration $\bS$ be generic, and let $\cC(\bS)$ denote the fusion polytope
specified by \eqref{eq:dual_prob}. Then
\begin{equation}\label{eq:ksjdb}
  \max_{\mathbf{c} \in \mathcal{C}(\mathbf{S})} \mathbf{n}\cdot\mathbf{c} = \max_{\mathbf{c} \in \mathcal{C}(\mathbf{S}) \cap P} \mathbf{n}\cdot\mathbf{c}
\end{equation}
where $P$ is the positive orthant: $P=\{\by: y_r\geq 0\ \text{for all
  $r$}\}$. The converse of this statement is true. If \eqref{eq:ksjdb}
is true then the sensor configuration is generic.
\end{theorem}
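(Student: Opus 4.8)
The plan is to rewrite the defining inequalities $A^{T}\by\le\bone$ of $\cC(\bS)$ combinatorially and then exploit the fact that a simplicial complex is closed under passing to subsets. First I would record the following consequence of Theorem~\ref{sec:5}: when $\bS$ is generic the atoms are in bijection with the simplices $\sigma\in\Sigma(\bS)$, and an atom $\langle i_{1},\dots,i_{t}\rangle$ lies in $S_{r}$ exactly when $r\in\{i_{1},\dots,i_{t}\}$ (if $r$ is an excluded index the atom lies in $S_{r}^{c}$, and being non-empty it cannot also lie in $S_{r}$). Hence, for generic $\bS$,
\[
  \cC(\bS)=\Bigl\{\by\in\bR^{R}\ :\ \textstyle\sum_{r\in\sigma}y_{r}\le 1\ \text{ for every }\ \sigma\in\Sigma(\bS)\Bigr\}.
\]

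For the forward implication, fix a sensor measurement vector $\bn\ge\bzero$ and let $\bc\in\cC(\bS)$ be arbitrary; let $\bc^{+}$ be its coordinatewise positive part, $(\bc^{+})_{r}=\max(c_{r},0)$. For any simplex $\sigma\in\Sigma(\bS)$ the set $\sigma^{+}=\{r\in\sigma:c_{r}>0\}$ is again a simplex (a simplicial complex is downward closed), so $\sum_{r\in\sigma}(\bc^{+})_{r}=\sum_{r\in\sigma^{+}}c_{r}\le 1$; thus $\bc^{+}\in\cC(\bS)\cap P$. Since every $n_{r}\ge 0$ and $(\bc^{+})_{r}\ge c_{r}$, we have $\bn\cdot\bc^{+}\ge\bn\cdot\bc$; hence $\sup_{\bc\in\cC(\bS)}\bn\cdot\bc=\sup_{\bc\in\cC(\bS)\cap P}\bn\cdot\bc$. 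The right-hand supremum is attained because $\cC(\bS)\cap P\subseteq[0,1]^{R}$ is compact, so it is a maximum, and as its maximiser also lies in $\cC(\bS)$ the left-hand supremum is attained there too. This is \eqref{eq:ksjdb}. (The same computation reproves directly that for generic $\bS$ the minimum $\minval(\bS,\bn)$ is already read off from $\cC(\bS)\cap P$.)

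For the converse I would prove the contrapositive: assuming $\bS$ degenerate, exhibit a single non-negative $\bn$ for which \eqref{eq:ksjdb} fails, so that \eqref{eq:ksjdb} holding for every non-negative measurement vector forces genericity. Degeneracy (Theorem~\ref{sec:5}) yields a subset $I\subseteq\{1,\dots,R\}$ with $\bigcap_{i\in I}S_{i}\ne\emptyset$ but $\langle I\rangle=\emptyset$; irredundancy gives $|I|\ge 2$, and since $\langle1,\dots,R\rangle=\bigcap_{r}S_{r}$ we have $I\ne\{1,\dots,R\}$. Because $\bigcap_{i\in I}S_{i}$ is a non-empty disjoint union of atoms, each with index set containing $I$ and none with index set equal to $I$, there is an atom whose index set $J$ satisfies $J\supsetneq I$. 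Now take $\bn$ with $n_{i}=1$ for $i\in I$ and $n_{j}=0$ otherwise. On $\cC(\bS)\cap P$ the constraint from the atom with index set $J$ reads $\sum_{i\in I}y_{i}+\sum_{j\in J\setminus I}y_{j}\le 1$, which with $y_{j}\ge 0$ gives $\sum_{i\in I}y_{i}\le 1$; equality holds at any standard basis vector $\be_{k}$ with $k\in I$, so the right-hand side of \eqref{eq:ksjdb} equals $1$. On the other hand, every atom index set $T$ contained in $I$ is a \emph{proper} subset of $I$ (as $I$ itself is not an atom index set), so $|T|\le|I|-1$; hence the point $\bc$ with $c_{i}=\tfrac{1}{|I|-1}$ for $i\in I$ and $c_{j}=-1$ for $j\notin I$ satisfies every atom constraint (those with $T\subseteq I$ sum to at most $|T|/(|I|-1)\le 1$; those containing some $j\notin I$ are at most $\tfrac{|I|}{|I|-1}-1=\tfrac{1}{|I|-1}\le 1$), so $\bc\in\cC(\bS)$ and $\bn\cdot\bc=|I|/(|I|-1)>1$. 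Therefore $\max_{\bc\in\cC(\bS)}\bn\cdot\bc>1=\max_{\bc\in\cC(\bS)\cap P}\bn\cdot\bc$, and \eqref{eq:ksjdb} fails.

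Most of this is constraint bookkeeping; the step that really uses the hypotheses is the feasibility check of the test point in the converse, i.e. the observation that the coordinates indexed by $I$ are constrained only through proper subsets of $I$. That is exactly what ``$I$ is not an atom index set (while every $T\subsetneq I$ may be)'' buys, and it is the combinatorial heart of the degenerate case, so I expect that verification to be the main place to be careful. I would also make explicit in the statement that the converse should read ``if \eqref{eq:ksjdb} holds for every non-negative $\bn$, then $\bS$ is generic,'' since the construction above only produces one offending measurement vector.
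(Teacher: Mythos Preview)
Your forward direction is essentially the paper's argument: both replace a feasible $\by$ by its coordinatewise positive part and use downward closure of the nerve to see that the truncated vector still satisfies every atom constraint. Your phrasing via $\sigma^{+}\in\Sigma(\bS)$ is a clean rendering of the paper's ``if $\ba$ is a row of $A^{T}$ then so is the zeroed-out $\ba'$''.

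Your converse is correct but genuinely different from the paper's. The paper introduces an auxiliary \emph{generic} configuration $\bS_{\text{gen}}$ with the same maximal simplices, observes $\cC(\bS)\cap P=\cC(\bS_{\text{gen}})\cap P$, and then argues via the \emph{primal} problem: with $\bn$ equal to $1$ on $I$ and $0$ elsewhere, $\minval(\bS,\bn)>1$ (no mass can sit in $\bigcap_{i\in I}S_i$ because the covering regions have measurement $0$) while $\minval(\bS_{\text{gen}},\bn)=1$, contradicting equality in \eqref{eq:ksjdb}. You instead stay entirely on the dual side and exhibit an explicit point $\bc\in\cC(\bS)$ with $\bn\cdot\bc=|I|/(|I|-1)>1$; the feasibility check hinges on exactly the fact you flag, that every atom index set contained in $I$ is a \emph{proper} subset. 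Your route avoids the detour through $\bS_{\text{gen}}$ and the primal LP, and it gives a quantitative gap; the paper's route has the virtue of explaining why the discrepancy is really a primal phenomenon (the missing atom forces extra mass). Your closing remark about the implicit ``for all $\bn$'' quantifier in the converse is well taken and applies equally to the paper's proof.
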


\begin{proof}
Fix $\bn$ and suppose that $\wy$ is an extreme point of $\mathcal{C}(\mathbf{S})$ for which
\begin{equation}
  \label{eq:kabdjka} \bn\cdot\wy= \max_{\bc\in \mathcal{C}(\mathbf{S})} \bn\cdot\bc,
\end{equation} and assume that $\wy\not \in P$.  Suppose further that there is
no extreme point of $\mathcal{C}(\mathbf{S})$ that is non-negative, at which the maximum is
achieved. We may assume, without loss of generality that the
first $s$ coordinates $\wy_1,\wy_2,\ldots,
\wy_s$ of $\wy$ are all negative and that all
subsequent coordinates of $\wy$ are
non-negative. Let $\wy'$ be the vector $\wy$ with the first $s$ coordinates
replaced by $0$.

Let $A$ be the defining matrix for the linear programming problem as
in \eqref{eq:lin_prog_mat}.
The generic property entails that if $\ba$ is a row of $A^T$ then a row vector  $\ba'$
for which the coordinatewise product satisfies $\ba\cdot\ba'=\ba'$ is also a row of
$A^T$. It follows that, for any row $\ba$  of $A^T$, there is a row of
$A^T$ which has the first $s$ rows of $\ba$ replaced by $0$. As a result,
$\ba\cdot\wy'\leq 1$ for all rows $\ba$ of $A$. Thus $\wy'\in \mathcal{C}(\mathbf{S}) \cap P$. Moreover,
clearly
\begin{equation}
  \label{eq:aldjna} \bn\cdot\wy'\geq \bn\cdot\wy.
\end{equation} The conclusion follows.

For the converse, we observe that the constraints in the dual problem
are labelled and determined by atoms. In consequence
$\cC(\bS_{\text{gen}})$ is a subset of $\cC(S)$. However, in the case
of $\cC(S)\cap P$, only the constraints that correspond to the highest
dimensional simplices (or equivalently the largest number of sensor
regions in the intersection forming the atom) are relevant, since any
constraint imposed in respect of such a simplex is stronger than one
corresponding to a simplex contained in it. The highest dimensional
constraints are common to both $\bS$ and to $\bS_{\text{gen}}$, so
that
\begin{equation}
  \label{eq:adnalkda}
  \cC(\bS)\cap P=  \cC(\bS_{\text{gen}})\cap P.
\end{equation}
Now, for a sensor measurement  vector $\bn$,
\begin{equation}
  \begin{aligned}
  \label{eq:abadakal}
  \max\{\by\cdot\bn:\by\in \cC(\bS)\}&\geq   \max\{\by\cdot\bn:\by\in \cC(\bS)\cap P\}\\
&=\max\{\by\cdot\bn:\by\in \cC(\bS_{\text{gen}})\cap P\}\\ &=
\max\{\by\cdot\bn:\by\in \cC(\bS_{\text{gen}})\}.
  \end{aligned}
\end{equation}
Thus if
\begin{equation}
  \label{eq:abcaoidn}
  \max\{\by\cdot\bn:\by\in \cC(\bS)\}=\max\{\by\cdot\bn:\by\in \cC(\bS)\cap P\},
\end{equation}
then
\begin{equation}
  \label{eq:adnalkdaalkd}
  \max\{\by\cdot\bn:\by\in \cC(\bS)\}=
\max\{\by\cdot\bn:\by\in \cC(\bS_{\text{gen}})\}.
\end{equation}
If $\bS$ is not generic, then there is some intersection $S_{i_1}\cap
S_{i_2}\cap\cdots  \cap S_{i_R}$ contained in a union of some different sensor regions
$S_{j_1},S_{j_2},\ldots, S_{j_T}$.
Consider a sensor measurement  vector which assigns $1$ to each of the sensor
regions $S_{i_r}$ $(r=1,2, \ldots, R)$, and $0$ to every other sensor region.

The minimal overall value  for $\bS$ is at least $2$ because there can be no targets in
the intersection  $S_{i_1}\cap
S_{i_2}\cap\cdots  \cap S_{i_R}$. On the other hand for the corresponding
sensor measurement  in the generic case the minimal overall value  is
$1$, since there can be just one target in the intersection $S_{i_1}\cap
S_{i_2}\cap\cdots  \cap S_{i_R}$.
\end{proof}

This result considerably simplifies calculations in the generic case. We call
$\mathcal{C}(\mathbf{S}) \cap P$ the \emph{positive fusion polytope},
though the word ``positive'' will be dropped where there is no
likelihood of confusion. One simple consequence of this result is that
in the generic case, if the sensor measurement  increases; that is, if we
have two sensor measurement  vectors $\bn=(n_1,n_2,\ldots,n_R)$ and
$\bn'=(n'_1,n'_2,\ldots,n'_R)$ where $n_i\geq n_i'$, then the minimum
increases, or at least does not decrease. This may seem, at first
sight  obvious, but it is quickly apparent from Theorem~\ref{thm:generic}
that it is  not true in the degenerate case, and indeed this is a defining
characteristic of genericity.  As  simple example consider  the case of
Figure~\ref{fig:degeneracy}(b). If the sensor measurement  vector is
$(1,0,1)$, then the minimum overall value  is $2$, whereas if the sensor measurement  vector
is $(1,1,1)$ the minimum overall value  is $1$.

In the generic case, the linear programming
problem is expressible entirely in terms of the simplicial complex.
The translation is as follows. For each vertex
of the simplicial complex (sensor region) we have the following constraint:
\begin{equation}
  \label{eq:2}
  \sum_{\sigma\text{ contains vertex $v$}}m_{\sigma}=n_{v},
\end{equation}
corresponding to the constraint \eqref{eq:lin_prog_primal} in the ``region''
formulation.  In addition, of course $m_{\sigma}\geq 0$. We need to minimize
\begin{equation}
  \label{eq:3}
  \sum_{\sigma} m_{\sigma},
\end{equation}
subject to these constraints.
The dualization results in the following problem expressed in terms of the
geometry of the simplicial complex.
\begin{equation}
  \label{eq:lakdnfd}
  \max\{\sum_r y_r:\sum_{r\in \sigma}y_r\leq 1,\ (\sigma\in \Sigma),\
  y_r\geq 0\  (r=1,2,\ldots,R)\}.
\end{equation}
The fusion polytope is then, with some abuse of notation,
\begin{equation}
  \label{eq:adadoob}
  \cC(\Sigma)=\{\by=(y_r):\sum_{r\in \sigma}y_r\leq 1,\ (\sigma\in
  \Sigma),\ y_r\geq 0 \ (r=1,2,\ldots,R)\}.
\end{equation}

Observe that if $\sigma\subset \sigma'$ are simplices then $\sum_{r\in
  \sigma}y_r\leq \sum_{r\in \sigma'}y_r$ so that the only inequalities that
need be considered in defining $\cC(\Sigma)$ are those that are maximal; that
is, are not contained in any larger simplex. Thus in  the case of
three regions with all possible intersections, so that the simplicial
complex is the triangle and all of its subsimplices,  the
only inequality (other than non-negativity) that is needed is $y_1+y_2+y_3\leq
1$. It follows immediately that the extreme points of $\Sigma$ in this case
are $(1,0,0),\ (0,1,0),\ (0,0,1)$ and the minimal count is $\max(n_1,n_2,
n_3)$.

As another example consider the case where the simplicial complex $\Sigma$
consists of the faces of a tetrahedron. This corresponds to four regions such
that each triple intersection is non-empty but the quadruple intersection is
empty. The constraint matrix  for the dual problem (corresponding only to highest dimensional
simplices) is
\begin{equation}
  \label{eq:albcfabd}
  A^T=
  \begin{pmatrix}
    1&1&1&0\\
    1&1&0&1\\
    1&0&1&1\\
    0&1&1&1\\
  \end{pmatrix}.
\end{equation}
It is fairly easy to see that the fusion polytope in this case has extreme
points $(1,0,0,0)$, $(0,1,0,0)$, $(0,0,1,0)$, $(0,0,0,1)$ and $(\frac{1}{3},
\frac{1}{3}, \frac{1}{3}, \frac{1}{3})$, and so the minimum count is
\begin{equation}
  \label{eq:agadbala}
  \max(n_1,n_2,n_3,n_4,\frac{1}{3}(n_1+n_2+n_3+n_4)).
\end{equation}

\section{Graphs}\label{sec:examples}

There are  a number of cases where the  minimum
estimation problem can be posed in terms of constructs on graphs. This happens
in various ways. Even when such a reformulation is possible, it does
not always provide a feasible approach to computation of the
minimum; rather these formulations and the effort in the
graph theory community to solve the corresponding problems suggest
that the minimum estimation problem is difficult.

\subsection{The Fractional Stable Set}
\label{sec:graph}
To describe the ideas, we consider a graph $\cG=(\cV,\cE)$. A set of
vertices $U\subset \cV$ is called a \emph{stable set} if no two
distinct elements of $U$ have an edge joining them; that is, if
$u_1\neq u_2$, $u_1,u_2\in U$, then $(u_1,u_2)\not \in \cE$. The convex
hull
\begin{equation}
  \label{eq:STAB}
  \stab(G)=\co \{\bone_U: U \text{ is a stable set of vertices}\}.
\end{equation}
is called the \emph{stable set polytope}. The \emph{fractional stable
  set polytope} of a graph is
\begin{equation}
  \label{eq:5}
 \fract(G) = \{ (x_v)_{v\in \cV}: x_v\geq 0,\ x_{v_1}+x_{v_2}\leq 1, \
  (v,v_1,v_2 \in \cV), \ (v_1,v_2)\in \cE\}.
\end{equation}
It is straightforward, and well known that $\stab(G)\subset
\fract(G)$, and that the latter is exactly the fusion polytope for the
situation where the $R$ sensor regions correspond to the vertices of the
graph, pairwise intersections correspond to edges, and there are no
triple or higher intersections; that is, where the nerve is just a
graph.  In this context some results exist; a
good reference is the notes of Wagler
\cite{wagler03:_graph_theor_probl_relat_polyt}, but even this
relatively simple case appears to produce no simple algorithm for
computation of the extreme points. Here is an important theorem.

\begin{theorem}[Gr\"otschel, Lov\'asz, Schrijver, \cite{groetschel88:_geomet_algor_combin_optim}] \label{thm:1} $\ $
  \begin{enumerate}
  \item $\stab(G)=\fract(G)$ if and only if $G$ is bipartite with no isolated vertices;
  \item The extreme points  of $\stab(G)$, regarded as functions on
    the vertices of $G$,  all have values $0$ and $1$;
  \item The extreme points of $\fract(G)$ all have values $0$, $1/2$, $1$.
  \end{enumerate}
\end{theorem}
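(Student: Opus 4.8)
The plan is to establish the three statements in the order (2), (3), (1), using the half‑integrality statement (3) as the engine for the ``bipartite $\Rightarrow$ equality'' direction of (1). Statement (2) is immediate: $\stab(G)$ is by definition the convex hull of the finite set $\{\bone_U : U\subseteq\cV\text{ stable}\}$, and the extreme points of a polytope presented as the convex hull of a finite point set always lie in that set; hence every extreme point of $\stab(G)$ equals some $\bone_U$ and in particular is $\{0,1\}$‑valued.

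For statement (3) I would run a symmetric perturbation argument. Fix an extreme point $\bx$ of $\fract(G)$ and suppose some coordinate lies in $(0,\tfrac12)\cup(\tfrac12,1)$; put $V_{<}=\{v:0<x_v<\tfrac12\}$ and $V_{>}=\{v:\tfrac12<x_v<1\}$, so $V_{<}\cup V_{>}\neq\emptyset$ while every other coordinate lies in $\{0,\tfrac12,1\}$ (no coordinate can exceed $1$ at a non‑isolated vertex, and an isolated vertex is forced to $0$ at an extreme point of $\fract(G)$). Let $\bx^{\pm}$ agree with $\bx$ off $V_{<}\cup V_{>}$, equal $x_v\pm\varepsilon$ on $V_{<}$, and equal $x_v\mp\varepsilon$ on $V_{>}$. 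Then $\bx=\tfrac12(\bx^{+}+\bx^{-})$, and for $\varepsilon>0$ small both $\bx^{\pm}$ lie in $\fract(G)$: non‑negativity and the strict bounds on $V_{<},V_{>}$ survive a small perturbation, and for an edge $(v_1,v_2)$ the quantity $x_{v_1}+x_{v_2}$ is unchanged when one endpoint is in $V_{<}$ and the other in $V_{>}$, changes by $\pm2\varepsilon$ from a value strictly below $1$ when both endpoints lie in $V_{<}$, strictly decreases when both lie in $V_{>}$, and in the remaining mixed cases either decreases or starts strictly below $1$ (here one uses that a vertex carrying $x$‑value $1$ has all neighbours at $0$, and that a neighbour of a $V_{>}$‑vertex cannot carry $x$‑value $\tfrac12$ or $1$). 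Since there are finitely many edges a uniform $\varepsilon$ works, and $\bx^{+}\neq\bx^{-}$, contradicting extremality; so all coordinates of $\bx$ lie in $\{0,\tfrac12,1\}$.

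For statement (1), one inclusion is automatic: every $\bone_U$ with $U$ stable lies in $\fract(G)$, so $\stab(G)\subseteq\fract(G)$ always. For the forward implication, assume $\stab(G)=\fract(G)$; an isolated vertex would make $\fract(G)$ unbounded along its coordinate while $\stab(G)$ is bounded, so there are none, and if $G$ were non‑bipartite it would contain an odd cycle $C$ on $2k+1$ vertices, whereupon the vector equal to $\tfrac12$ on $V(C)$ and $0$ elsewhere would lie in $\fract(G)$ but not in $\stab(G)$, since any stable set meets $C$ in at most $k$ vertices and so every point of $\stab(G)$ has coordinate‑sum over $V(C)$ at most $k<k+\tfrac12$. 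Conversely, let $G$ be bipartite with sides $A,B$ and no isolated vertices; then $\fract(G)$ is bounded, hence the convex hull of its extreme points, so it suffices to show every extreme point $\bx$ is integral, as then $U=\{v:x_v=1\}$ is stable and $\bx=\bone_U\in\stab(G)$. By (3), if $\bx$ is not integral then $W=\{v:x_v=\tfrac12\}$ is nonempty; push the half‑values to $1$ on $A\cap W$ and $0$ on $B\cap W$ to get $\bx^{+}$, and reverse the roles to get $\bx^{-}$. Bipartiteness forces every edge inside $W$ to join $A\cap W$ to $B\cap W$, and any vertex adjacent to a $\tfrac12$‑vertex must carry $x$‑value $0$; checking the edge types exactly as in (3) gives $\bx^{\pm}\in\fract(G)$, and since $\bx=\tfrac12(\bx^{+}+\bx^{-})$ with $\bx^{+}\neq\bx^{-}$ this contradicts extremality. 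Hence every extreme point is integral and $\fract(G)\subseteq\stab(G)$, giving equality.

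The genuinely delicate part — and the one that pins down the precise hypothesis ``bipartite with no isolated vertices'' — is the edge‑by‑edge feasibility bookkeeping in the two perturbation arguments; by contrast, the odd‑cycle vector and the unbounded ray at an isolated vertex are exactly the obstructions forcing that hypothesis, so the forward implication of (1) amounts to exhibiting those two witnesses.
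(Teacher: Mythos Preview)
The paper does not prove this theorem at all: it is quoted as a known result of Gr\"otschel, Lov\'asz and Schrijver, with a bare citation and no argument, and is used only as a black box to discuss the necklace example and related graph cases. So there is no ``paper's own proof'' to compare against.

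That said, your proof is correct and is essentially the standard textbook argument. Part (2) is immediate from the definition of $\stab(G)$ as a convex hull of incidence vectors. Your perturbation for (3) is the classical one: the only point that needs a moment's care is the case where both endpoints of an edge lie in $V_{>}$, which you describe as ``strictly decreases'' but which is in fact vacuous, since $x_{v_1}+x_{v_2}>1$ would already violate feasibility; this does not affect the argument. For (1), your odd-cycle witness and the unbounded ray at an isolated vertex are exactly the right obstructions for the forward direction, and your bipartite rounding of the half-valued vertices (push $A\cap W$ to $1$, $B\cap W$ to $0$, and vice versa) is the standard way to eliminate half-integral extreme points in the bipartite case. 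The one small gap worth closing explicitly is the line ``any vertex adjacent to a $\tfrac12$-vertex must carry $x$-value $0$'': you should say ``$0$ or $\tfrac12$'' (i.e.\ either outside $W$ with value $0$, or inside $W$), since a $\tfrac12$-vertex can certainly neighbour another $\tfrac12$-vertex; your subsequent edge-by-edge check is already written to handle both possibilities, so this is only a wording slip.
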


We illustrate these ideas with the following simple example.
\begin{example}
  \label{ex:necklace}
We consider the case when the regions $S_{1}, S_{2}, \ldots, S_{R}$ satisfy
$S_{i}\cap S_{i+1}\neq \emptyset$, $S_{1}\cap S_{R}\neq \emptyset$
and these are the only non-empty
 intersections. The nerve is a graph with $R$ vertices connected in a cycle.
 A collection of extreme points involve just $0$s and $1$s which
 satisfy the following rules:
 \begin{enumerate}
 \item every $1$ is isolated, that is, $11$ does not appear;
 \item no sequences with three adjacent $0$s appear.
 \end{enumerate}
Thus, between every pair of $1$s either a single $0$ or a pair of $0$s
appears. If $R$ is even, all extreme points are of this form, but if
$R$ is odd there is an additional extreme point of the form
$\bd=(\half,\half,\ldots, \half)$.
When $R$ is even $\bd$ is a convex combination of two $0$-$1$  extreme
points and so is not extreme.
\end{example}

While the preceding discussion shows how graph theoretic ideas apply
for the case where there are no non-trivial triple intersections,
there are some other situations where graph theory is applicable.
A construct  on graphs related to the stable and the fractional stable
set polytopes is obtained as follows. We recall that a \emph{clique}
in the graph $G$ is a set $Q$ of vertices that as the induced  subgraph of $G$ is
complete; that is, each pair of vertices in $Q$ is connected by an edge.
Now we define
\begin{equation}
  \label{eq:4}
  \qstab(G)=\{ (x_v)_{v\in \cV}: x_v\geq 0,\ \sum_{v\in Q}x_{v}\leq 1, \
  \text{$Q$ is a clique in $G$}\}
\end{equation}

The $\qstab$ construction permits representation of the fusion
polytope for  other sensor configurations than just those with only
pairwise intersections. For instance,  consider
Figure~\ref{fig_qstab1}. In this case, the $2$-simplices are present
and so the nerve  is not a graph. However, the cliques of
the $1$-skeleton $G$  are just the two $2$-simplices and so
$\qstab(G)$ is the fusion polytope. On the other hand if the sensor
regions are as in Figure~\ref{fig_qstab2} then the simplicial complex
is $1$-dimensional and comprises the edges of a triangle. In this case
the single clique consists of all vertices of the graph and so
$\qstab$ is not the same as the fusion polytope.

A \emph{flag complex} is an abstract simplicial complex in which every
minimal nonface has exactly two elements~\cite{Tits1974Book}.  In
essence, any flag complex is the clique complex (that is, the set of
all cliques) of its $1$-skeleton.  For example, if three edges of a
$2$-simplex are in the simplicial complex in question, then so is the
$2$-simplex itself.  More generally, if all of the faces of a simplex
are in the simplicial complex, then so is the entire simplex.  The
boundary of a $2$-simplex  does not form a flag complex, but the $2$-simplex
itself does.
We have the following theorem.
\begin{theorem}\label{thm:2}
Given sensor regions $S_{1},S_{2},\ldots, S_{R}$ which form a generic
sensor configuration, the fusion polytope is identified with
$\qstab(G)$ via the inclusion of a graph $G$ as the $1$-skeleton of
the simplicial complex of the sensor configuration
if and only if it  is a flag complex.
\end{theorem}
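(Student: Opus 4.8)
The plan is to prove the two directions separately, trading freely between the language of atoms/simplices (legitimate by Theorem~\ref{sec:5}, since $\bS$ is generic) and the language of the $1$-skeleton $G$. Throughout, the fusion polytope of a generic configuration is $\cC(\Sigma)$ as in \eqref{eq:adadoob}, and by the remark following that display only the inequalities attached to \emph{maximal} simplices matter. On the other hand, $\qstab(G)$ is cut out by the inequalities attached to \emph{maximal cliques} of $G$. So the whole statement reduces to comparing these two families of inequalities (together with the common non-negativity constraints $y_r\geq 0$), and the key point is: $\cC(\Sigma)=\qstab(G)$ as subsets of $\bR^R$ if and only if the maximal simplices of $\Sigma$ are exactly the maximal cliques of $G$.

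For the ``if'' direction, suppose $\Sigma$ is a flag complex. Then $\Sigma$ is by definition the clique complex of its $1$-skeleton $G$, so every simplex of $\Sigma$ is a clique of $G$ and every clique of $G$ is a simplex of $\Sigma$; in particular the maximal simplices coincide with the maximal cliques. Hence the defining inequality systems of $\cC(\Sigma)$ and of $\qstab(G)$ literally agree, so the two polytopes are equal and a fortiori the fusion polytope is identified with $\qstab(G)$ via the stated inclusion of $G$.

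For the ``only if'' direction, argue by contrapositive: assume $\Sigma$ is \emph{not} a flag complex. Then there is a clique $Q$ of $G$ that is not a simplex of $\Sigma$; choose one of minimal size, so every proper subset of $Q$ \emph{is} a simplex (this uses that all vertices and all edges of $G$ are simplices by construction of $G$ as the $1$-skeleton, plus minimality). The vector $\by^{*}$ defined by $y^{*}_v = 1/(\#Q - 1)$ for $v\in Q$ and $y^{*}_v = 0$ otherwise then satisfies $\sum_{v\in \sigma} y^{*}_v = \#\sigma /(\#Q-1) < 1$ for every simplex $\sigma\in\Sigma$ (since $\#\sigma \le \#Q - 1$, as $\sigma\cap Q$ is a proper subset of $Q$ whenever $\sigma$ is a simplex—otherwise $Q\subseteq\sigma\in\Sigma$, contradicting $Q\notin\Sigma$), so $\by^{*}\in\cC(\Sigma)$; but $\sum_{v\in Q} y^{*}_v = \#Q/(\#Q-1) > 1$, so $\by^{*}\notin\qstab(G)$. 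Thus $\cC(\Sigma)\neq\qstab(G)$, and in fact $\qstab(G)\subsetneq\cC(\Sigma)$, so the fusion polytope is strictly smaller than $\qstab(G)$ and cannot be identified with it.

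The main obstacle I anticipate is pinning down exactly what the theorem means by ``the fusion polytope is identified with $\qstab(G)$ via the inclusion of $G$ as the $1$-skeleton''—i.e.\ making precise that this identification is meant to be honest equality of the two polytopes sitting in the same $\bR^R$ (one could imagine weaker notions, such as a combinatorial isomorphism of face lattices, for which the statement would be false: e.g.\ the boundary of a $2$-simplex has $\qstab(G)$ equal to the single simplex $\{(1,0,0),(0,1,0),(0,0,1)\}$'s triangle cut by $y_1+y_2+y_3\le 1$, whereas the fusion polytope is the cube-corner cut by the three edge inequalities, and these are genuinely different polytopes, as Figure~\ref{fig_qstab2} already illustrates). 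Once that reading is fixed, the argument above is essentially a bookkeeping comparison of inequality systems; the only mild subtlety is the minimal-clique choice and the inequality $\#\sigma\le\#Q-1$ in the converse, which I would spell out carefully but expect to be routine.
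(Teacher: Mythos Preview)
Your proof is correct in substance and, for the ``only if'' direction, actually more rigorous than the paper's. The ``if'' direction matches the paper exactly: both argue that a flag complex is the clique complex of its $1$-skeleton, so the inequality systems defining $\cC(\Sigma)$ and $\qstab(G)$ coincide. For the converse, the paper argues informally that equality of the two polytopes forces a ``correspondence between simplices and cliques'' and concludes directly that $\Sigma$ is flag; this step is not really justified (equal polytopes need not have identical defining inequality systems). Your contrapositive with the explicit witness point $\by^*$ is a genuinely different and cleaner route: it exhibits a concrete separation between the two polytopes whenever $\Sigma$ fails to be flag.

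Two small slips to fix. First, the sum $\sum_{v\in\sigma} y^*_v$ equals $\#(\sigma\cap Q)/(\#Q-1)$, not $\#\sigma/(\#Q-1)$; your parenthetical justification (``$\sigma\cap Q$ is a proper subset of $Q$'') is the right one and bounds $\#(\sigma\cap Q)\le \#Q-1$, so the argument goes through once the formula is corrected. Second, this gives $\le 1$, not $<1$; the non-strict inequality is all that is needed for $\by^*\in\cC(\Sigma)$. Your closing remark about what ``identified with'' must mean is well taken and worth keeping.
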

\begin{proof}
  We assume throughout that the sensor configuration is generic.
  Next, assume that the simplicial complex of the sensor configuration
  is a flag complex $K$. Then $K$ is the clique complex of its
  $1$-skeleton $G$ by definition.   $\qstab(G)$ is determined by
  all of the cliques and the fusion polytope is determined by the simplicial
  complex. Since this is  a clique complex by definition,  the fusion
  polytope is identified with $\qstab(G)$.

 For the
  reverse direction, assume that the fusion polytope is identified
  with $\qstab(G)$, and that $G$ is the $1$-skeleton of the simplicial
  complex of the sensor configuration.  Since the fusion polytope is
  determined by this simplicial complex, and
  $\qstab(G)$ is determined by all cliques, there is a
  correspondence between simplicies and cliques, which means any
  simplex of the simplicial complex is a clique of $G$.
  The simplicial complex is then a flag complex by definition.
\end{proof}

Much work has been done on $\qstab(G)$ by many authors; see, for
instance, \cite{KosterWagler2006} for several references.  Since we
are interested in the extreme points of $\qstab(G)$ in situations when
$\qstab(G)$ is the fusion polytope of a sensor configuration, the
following result is important. It will require the definition of a
\emph{perfect} graph. First we need to know that a subgraph of $G$ is
\emph{induced} if is obtained by taking a subset of the vertices of
$G$ and all edges from $G$ with endpoints in the subset. A graph $G$
is \emph{perfect} if, for every subgraph $H$ of $G$, the chromatic
number of $H$ is equal to the size of the largest clique of $H$. All
bipartite graphs are perfect, whereas an odd cycle is not.

\begin{theorem}[\cite{Chvatal1975Journal}]
  \label{thm:qstab_perfect}
  Let $G$ be a graph. Then $\stab(G)=\qstab(G)$ if and only if $G$ is
  perfect.
\end{theorem}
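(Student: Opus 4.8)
The plan is to recast $\stab(G)=\qstab(G)$ as an integrality property and then prove the two inclusions separately, the one genuinely hard ingredient being Lov\'asz's replication lemma. First I would record the free inclusion $\stab(G)\subseteq\qstab(G)$: a stable set meets every clique in at most one vertex, so every indicator $\bone_U$ of a stable set satisfies all clique inequalities, and $\stab(G)$ is their convex hull. Since the $0$--$1$ points of $\qstab(G)$ are precisely the stable-set indicators, $\stab(G)=\qstab(G)$ holds exactly when every vertex of the bounded rational polytope $\qstab(G)$ is integral. I would then prove the \emph{replication lemma}: if $G$ is perfect and $G'$ arises from $G$ by adding a vertex $v'$ with neighbourhood $\{v\}\cup N(v)$, then $G'$ is perfect. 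This goes by induction on $|V(G)|$: every proper induced subgraph of $G'$ is an induced subgraph of $G$ or a replication of a smaller perfect graph, hence is perfect, so it suffices to check $\chi(G')=\omega(G')$. If $v$ lies in a maximum clique of $G$ then $\omega(G')=\omega(G)+1$ and one extra colour extends an optimal colouring of $G$; otherwise $\omega(G')=\omega(G)$, and, taking an optimal colouring of $G$ with colour class $A\ni v$, the induced subgraphs $G-A$ and $G[(V\setminus A)\cup\{v\}]$ have clique number $\le\omega(G)-1$ (every maximum clique meets $A$, and none contains $v$), so colouring $G[(V\setminus A)\cup\{v\}]$ with $\omega(G)-1$ colours and giving the stable set $(A\setminus\{v\})\cup\{v'\}$ of $G'$ one more colour completes an $\omega(G')$-colouring.

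For $(\Leftarrow)$, assume $G$ is perfect and let $x^{*}$ be a vertex of $\qstab(G)$; I would show $x^{*}\in\stab(G)$. Since $x^{*}$ is rational and deleting its zero coordinates restricts to an induced (hence perfect) subgraph, I may assume $x^{*}_v=p_v/q$ with all $p_v\ge 1$. Replacing each $v$ by a clique of $p_v$ true twins and applying the replication lemma repeatedly yields a perfect graph $G'$; every clique of $G'$ projects to a clique $Q$ of $G$ and has at most $\sum_{v\in Q}p_v=q\sum_{v\in Q}x^{*}_v\le q$ vertices, so $\omega(G')\le q$ and hence $\chi(G')=\omega(G')\le q$. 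The $q$ colour classes of an optimal colouring are stable sets of $G'$ meeting the copies of each $v$ at most once, so they project to stable sets $S_1,\dots,S_q$ of $G$ with each $v$ lying in exactly $p_v$ of them, whence $x^{*}=\tfrac1q\sum_{k}\bone_{S_k}\in\stab(G)$.

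For $(\Rightarrow)$, assume $\stab(G)=\qstab(G)$. The same equality then holds for every induced subgraph $H$, since cutting both polytopes by the supporting hyperplanes $\{x_v=0:v\notin V(H)\}$ produces $\stab(H)$ from one and $\qstab(H)$ from the other. For such an $H$ the vector $\tfrac1{\omega(H)}\bone$ satisfies every clique inequality (cliques have at most $\omega(H)$ vertices), so it lies in $\qstab(H)=\stab(H)$; writing it as $\sum_j\lambda_j\bone_{S_j}$ over stable sets $S_j$ and summing coordinates gives $|V(H)|/\omega(H)\le\alpha(H)$, i.e.\ $|V(H)|\le\alpha(H)\,\omega(H)$ for every induced $H$. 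By Lov\'asz's characterisation of perfection --- a graph all of whose induced subgraphs $H$ satisfy $|V(H)|\le\alpha(H)\omega(H)$ is perfect --- $G$ is perfect.

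I expect the main obstacle to be precisely the two imported facts: the replication lemma, used to build $G'$ in $(\Leftarrow)$, and Lov\'asz's $\alpha\omega$-characterisation, used in $(\Rightarrow)$, the latter being itself proved by a replication-and-counting argument on minimally imperfect graphs; I would prove the replication lemma in full and cite Lov\'asz's theorem. As a consistency check on the $(\Rightarrow)$ bookkeeping one may use Fulkerson antiblocking --- the antiblocker of $\stab(G)$ is $\qstab(\overline{G})$ --- to observe that the property ``$\stab=\qstab$'' is preserved under complementation, but this alone does not deliver perfection, so it shortens nothing.
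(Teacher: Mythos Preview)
The paper does not prove this theorem at all: it is stated with a citation to Chv\'atal and used as a black box, so there is no ``paper's own proof'' to compare against. Your sketch is the classical Fulkerson--Lov\'asz--Chv\'atal argument and is correct as written; the replication lemma, the $(\Leftarrow)$ direction via blowing up vertices into cliques and colouring, and the $(\Rightarrow)$ direction via the $\alpha\omega$-inequality applied to $\tfrac{1}{\omega(H)}\bone\in\qstab(H)$ are all standard and sound. The only point worth flagging is that you are honest about importing Lov\'asz's characterisation (every induced $H$ satisfies $|V(H)|\le\alpha(H)\omega(H)$ implies perfection) for $(\Rightarrow)$: that result is of comparable depth to the theorem itself, so if you were writing this up self-contained you would need the minimally-imperfect-graph counting argument as well.
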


Of course, if $\stab(G)=\qstab(G)$ then the dominant extreme points must all
take the values $0$ and $1$ and are fairly easy to write down; indeed
they are the characteristic functions of maximal stable sets.

\def\firstcircle{(0,0) circle (1.5cm)}
\def\secondcircle{(45:2cm) circle (1.5cm)}
\def\thirdcircle{(0:2cm) circle (1.5cm)}
\def\fourthcircle{(22.5:4cm) circle (1.5cm)}

\begin{figure}[h!]
  \centering
\begin{tikzpicture}
    \draw[fill=green,opacity=0.4] \firstcircle node[below,opacity=1] {$S_{1}$};
    \draw[fill=blue,opacity=0.4] \secondcircle node [above,opacity=1] {$S_{2}$};
    \draw[fill=yellow,opacity=0.4] \thirdcircle node [below,opacity=1] {$S_{3}$};
    \draw[fill=red,opacity=0.4] \fourthcircle node [above,opacity=1] {$S_{4}$};
\end{tikzpicture}
\begin{tikzpicture}[-,shorten >=1pt,auto,node distance=3cm,
  thick,main node/.style={circle,fill=blue!20,draw,font=\sffamily\Large\bfseries}]

  \node[main node] (s1) {$S_{1}$} ;
  \node[main node] (s2) [above right of=s1] {$S_{2}$};
  \node[main node] (s3) [right of =s1] {$S_{3}$};
  \node[main node] (s4) [right of=s2] {$S_{4}$};

  \path[every node/.style={font=\sffamily\small}]
    (s1) edge node [left] {} (s2)
    (s1) edge node [right] {} (s3)
    (s2) edge node [right] {} (s3)
    (s2) edge node [left] {} (s4)
    (s3) edge node [left] {} (s4);
  \end{tikzpicture}
  \caption{Example where $QSTAB(G)$ is the Fusion Polytope}
  \label{fig_qstab1}
\end{figure}
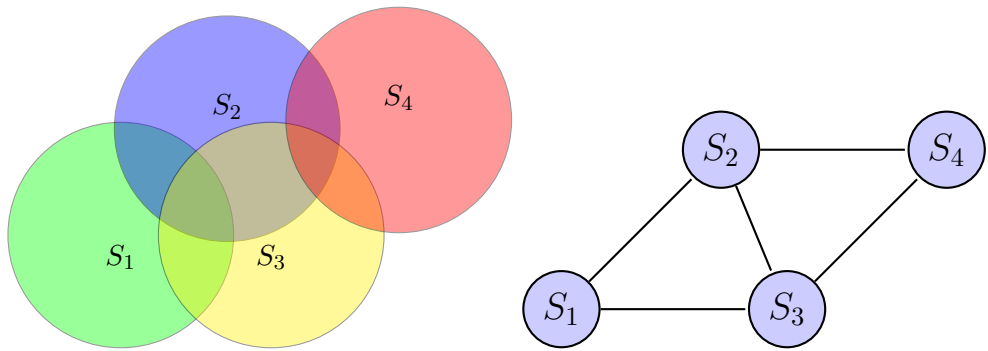

\def\firstellipse{(0,0) ellipse (1.5cm and 0.5cm)}
\def\secondellipse{(70:1.2cm) ellipse (1.5cm and 0.5cm)}
\def\thirdellipse{(-40:1.2cm) ellipse (1.5cm and 0.5cm)}

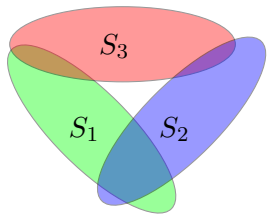
\begin{figure}[h!]
  \centering
\begin{tikzpicture}
    \draw[rotate=-45] \firstellipse[fill=green,opacity=0.4];
    \draw \secondellipse[fill=red,opacity=0.4];
    \draw[rotate=45] \thirdellipse[fill=blue,opacity=0.4];
      \draw (0.3,1.1) node {$S_{3}$};
  \draw (1.1,0) node {$S_{2}$};
    \draw (-0.1,0) node {$S_{1}$};
  \end{tikzpicture}
  \caption{Example where $QSTAB(G)$ is different from the  Fusion Polytope}
  \label{fig_qstab2}
\end{figure}
\section{The Two Dimensional Case}\label{sec:1}
\subsection{Quadruple Intersections}
\label{sec:quad_int}
At this point we consider planar sensor regions of various kinds. As
an example, consider an ``inflated tiling'' of the plane; that is, a
standard regular tiling by rectangles, illustrated in
Figure~\ref{fig:23232}, where the tiles are slightly inflated to
create overlapping regions.
\begin{figure}[h!]
  \label{fig:bricks}
  \centering
\includegraphics[width=0.4\textwidth]{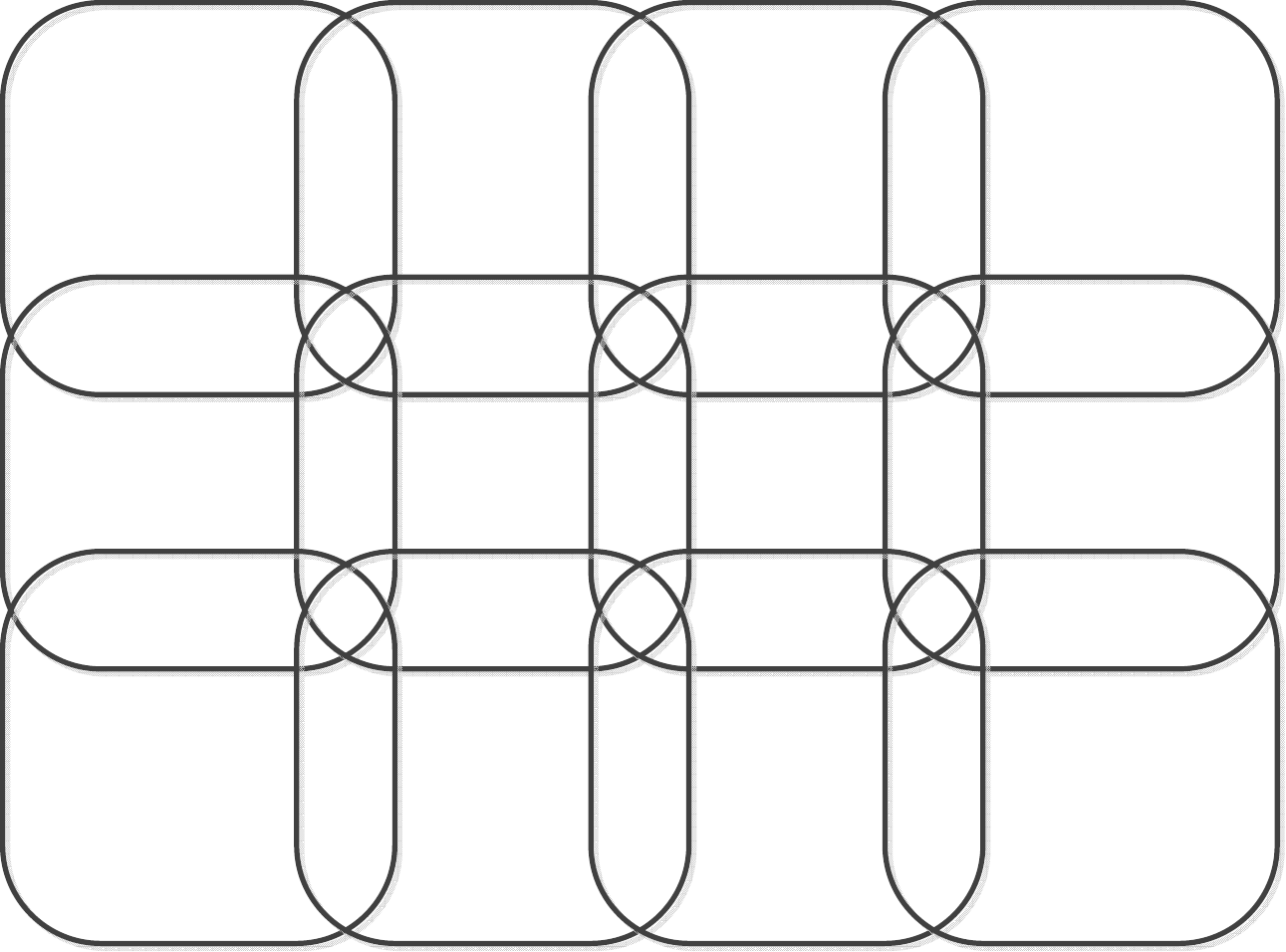}
  \caption{Tiling by  Rectangles}
  \label{fig:23232}
\end{figure}
To be precise,  the unit square $[0,1]\times[0,1]$ in
the plane is covered by $Q^2$ regions as in the figure. Note that there are quadruple
intersections but no higher, so that the nerve will be
$3$-dimensional. The regions are labelled $(S_{ij})_{i,j=1}^Q$ in
matrix style, and  the following quadruple intersections, and only
these are non-empty:
\begin{equation}
  \label{eq:1}
  S_{ij}\cap S_{(i+1)j}\cap S_{i(j+1)} \cap S_{(i+1)(j+1)}.
  \end{equation}
In addition double and triple intersections forced by these
quadruple ones are present. One might expect that this kind of sensor
configuration would be relatively easily to handle. Unfortunately it fails to be generic, specifically there are relations such as:
\begin{equation}
  \label{eq:andaldl}
  S_{11}\cap S_{22}\subset S_{12}\cup S_{21},
\end{equation}
so that the potential atom $S_{11}\cap S_{22}\cap S_{12}^{c}\cap
S_{21}^{c}\cap \bigcap_{i>2 \text{ or } j>2}S_{ij}^{c}$ is empty.
It follows, by Theorem~\ref{thm:generic}, that the fusion polytope
is not in the positive orthant. Indeed the point $(-1,1,1,-1)$ is an extreme
point of the fusion polytope for the case of just four regions
$S_{11},S_{12},S_{21},S_{22}$.

It turns out that this situation is typical. Indeed, any configuration
of four regions in the plane that satisfies some fairly reasonable
topological assumptions and that has a quadruple intersection fails to
be generic, as the following theorem shows. In order to describe the
result, more precision will be needed about the nature of regions and
their intersections than has been required so far. At this point we
shall introduce the topological constraints predicted when atoms were
first defined in \ref{def:atom}. We need to make some minor
adjustments to definitions in order to enable us to use topological
arguments.

For this purpose, a \emph{regular} sensor region $S$ is a bounded open
set in the plane of which the boundary $\partial S$ is a simple
piecewise smooth closed curve. A \emph{regular} sensor configuration
is one comprising regular sensor regions. We need also to redefine the
notion of an \emph{atom}. In this case, we replace each set of
the form $S_{i}^{c}$ in (\ref{eq:2}) by its interior $\overline{S_{i}}^{c}$,
but we still demand that an atom is non-empty, though we will use the
phrase that the ``atom is represented'' to mean exactly that. Note
that atoms are always open with this definition.
A regular sensor configuration is said to
be \emph{generic} if an intersection of sensor regions is not
contained in a union of different sensor regions.

A regular sensor configuration $\{S_{1}, S_{2}, S_{3},\ldots, S_{R}\}$
with $X=\cup_{r=1}^{R} S_{r}$ is
said to be \emph{normal} if the boundaries of any two sensor regions intersect
in at most two points, and there are no points of triple intersection
of boundaries of sensor regions. We also require that boundaries of
atoms are simple closed curves; these are, of course made up of
segments of boundaries of sensor regions.
Observe that, by the Jordan-Schoenflies Theorem
\cite{cairns51:_elemen_proof_jordan_schoen_theor}, these conditions
imply that both closures of sensor regions and closures of atoms are
homeomorphic to closed discs.
For the rest of this section we assume that all sensor regions are regular and just refer to
them as sensor regions.
\begin{theorem}
  \label{thm:quad_intersect}
  For a normal sensor configuration $\mathbb S=\{S_1,S_2, S_3, S_4\}$
  with closure a simply-connected closed set in the plane $\mathbb
  R^2$,
  at least two atoms are not represented; that is, are empty. The
  sensor configuration is not generic.
\end{theorem}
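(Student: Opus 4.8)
The plan is to argue by a planar/topological counting argument on the arrangement of the four boundary curves, using the fact that closures of sensor regions and closures of atoms are closed discs (Jordan--Schoenflies, available under the \emph{normal} hypothesis). First I would fix notation: let $C_i=\partial S_i$, and consider the planar subdivision of $X=\bigcup_{r=1}^4 S_r$ (together with its bounded complement, if any) induced by the four simple closed curves $C_1,\dots,C_4$. Since the configuration is normal, any two boundaries meet in at most two points and there are no triple boundary points, so this is a genuine arrangement of four Jordan curves in general position; its faces are precisely the atoms (together with the single unbounded face and possibly bounded faces in the complement of $X$). I would then set up an Euler-characteristic / combinatorial count: the total number of atoms is controlled by the number of vertices (pairwise boundary crossings, at most $2\binom{4}{2}=12$) and edges of this arrangement. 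The point of the quadruple intersection hypothesis is that the atom $\langle 1,2,3,4\rangle = S_1\cap S_2\cap S_3\cap S_4$ is represented, i.e. is one of the faces; a nonempty fourfold intersection of discs in general position forces a rather rigid local picture around that central cell, and in particular forces several of the ``potential'' atoms of lower order to coincide or to be empty.

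The key steps, in order, are: (1) show that if all $2^4-1 = 15$ potential atoms (those with $T\ge 1$ in the notation of Definition~\ref{def:atom}) were represented, the arrangement of the four Jordan curves would need strictly more vertices or a non-planar incidence pattern than $12$ crossing points can supply -- this is the heart of the matter and is most cleanly phrased as: the bipartite ``region in $S_i$?'' incidence structure of $15$ cells against $4$ discs, realized by a curve arrangement with at most $12$ double points, is combinatorially impossible. (2) Sharpen (1) to get the count \emph{two}: by a parity or Euler-formula refinement, at least two of the potential atoms must be empty, not just one. (3) Conclude non-genericity: I would exhibit the specific containment relation that genericity forbids. Concretely, the presence of $S_1\cap S_2\cap S_3\cap S_4\ne\emptyset$ together with the normality constraints forces a relation of the shape $S_i\cap S_j\subset S_k\cup S_\ell$ for some relabelling $\{i,j,k,\ell\}=\{1,2,3,4\}$ -- exactly the two-dimensional phenomenon already observed for the inflated rectangular tiling in~\eqref{eq:andaldl} -- and such a containment is precisely the failure of the generic condition as restated for regular configurations just before the theorem. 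Even without identifying which relation occurs, step~(1)/(2) gives that some atom $\langle i_1,\dots,i_t\rangle$ that ``ought'' to exist is empty because the corresponding intersection equals a smaller one or is swallowed by a union of the remaining regions, and by Theorem~\ref{sec:5} this is equivalent to non-genericity.

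The main obstacle I expect is step~(1): making the ``four Jordan curves in general position cannot carve out all $15$ atoms in the plane'' argument fully rigorous rather than merely plausible from pictures. The clean way to do this is Euler's formula $V-E+F=2$ for the planar graph formed by the arrangement (vertices $=$ the $\le 12$ boundary intersection points, edges $=$ boundary arcs between consecutive intersection points, faces $=$ atoms plus outside plus complementary cells), combined with the constraint that each curve $C_i$ is a single closed curve carrying an even number $\le 4$ of the vertices, hence contributes that many edges. One counts that the maximum number of bounded faces lying inside $X$ that such a graph can have falls short of $15$ by at least two -- and the ``at least two'' is where the parity of $2\binom{4}{2}$ crossings per curve and the simply-connectedness of $\overline X$ must be used carefully. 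A secondary technical nuisance is bookkeeping the faces of the arrangement that lie in the complement of $X$ (they are not atoms), for which the hypothesis that $\overline X$ is simply connected is exactly what is needed to keep the complement from contributing spurious cells; I would handle this by first reducing to the ``no complementary holes'' situation via that hypothesis and then running the pure Euler count.
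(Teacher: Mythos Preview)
Your approach is essentially the paper's: an Euler-characteristic count on the planar arrangement of the four boundary curves. The execution in the paper is shorter than you anticipate. One simply observes that the pairwise boundary crossings give at most $2\binom{4}{2}=12$ vertices, that every such vertex has degree exactly $4$ (two curves through it, two arcs each), so by handshaking $E=2V$, and that the closure of $X$ is a disc, hence has Euler characteristic $1$. Thus $V-E+F=1$ yields $F=V+1\le 13$, against $2^4-1=15$ potential atoms; two must be missing. There is no separate parity refinement needed to get ``two'' rather than ``one'': the single Euler computation delivers the bound $13$ directly.

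Two small corrections to your sketch. First, each curve $C_i$ meets the other three in at most two points each, so it carries at most $6$ vertices, not $4$; this does not affect the global count $V\le 12$. Second, your step~(3) is unnecessary as a separate argument. Once you know the number of nonempty atoms falls strictly below the number of simplices in the nerve (which is all $15$ subsets, since the quadruple intersection is nonempty), Theorem~\ref{sec:5} immediately gives non-genericity; you do not need to exhibit a specific relation $S_i\cap S_j\subset S_k\cup S_\ell$. The simple-connectedness of $\overline X$ enters only to pin down the Euler characteristic as $1$ (no complementary holes to bookkeep), exactly as you suspected.
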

\begin{proof}
  A simplicial complex is constructed with the intersection points of
  boundaries of sensor regions as vertices, the segments of the
  boundaries between the intersection points as edges and the atoms of
  the sensor configuration as faces. This is a planar simplicial
  complex.  By our assumption on intersections of boundaries of  sensor regions,
there are at most $12$ vertices~($6+4+2$) and the degree of
  each vertex is $4$ in the graph which is the $2$-skeleton of the
  simplicial complex.  Furthermore, the number of
  edges is twice of the number of vertices, by the condition on the
  boundaries of atoms.  Since the Euler characteristic of the closure
  of the union of the sensor regions, is $1$,    $v - e + f = 1$, we have $v = f -
  1$.  Since $v \leq 12$, $f \leq 13$ is obtained; that is, there are at
  least two atoms not represented.
\end{proof}

If the condition of normality is relaxed to allow the boundary of one
sensor region to intersect the boundary of
another  in $4$ points, then
there will be two more vertices in the corresponding graph and the
number of possible faces will be equal to $15$.  All atoms can then be represented
and connected, as the example in
Figure~\ref{fig:four_regions_all_rep_connected} demonstrates.
Furthermore, if more pairwise intersections of boundaries of sensor
regions with $4$ points appear, all atoms still can still be  represented,
however, will be disconnected as shown in
Figure~\ref{fig:four_regions_all_rep}.

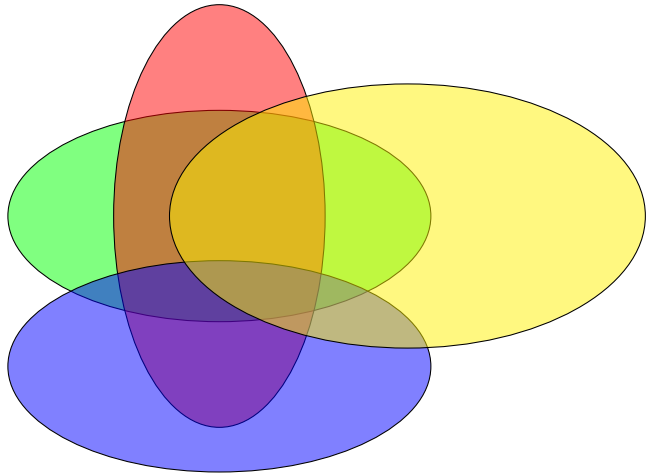
\begin{figure}[h!]
  \centering
  \begin{tikzpicture}
  \begin{scope}[fill opacity=0.5]
      \draw[rotate=0,fill=green] (0,0) ellipse (80pt and 40pt);
  \draw[rotate=90,fill=red] (0,0) ellipse (80pt and 40pt);
  \draw[rotate=0,fill=blue] (0,-2) ellipse (80pt and 40pt);
  \draw[rotate=00,fill=yellow] (2.5,0) ellipse (90pt and 50pt);
  \end{scope}
\end{tikzpicture}
  \caption{Four Regions in the Plane  with All  Atoms Represented and Connected}
  \label{fig:four_regions_all_rep_connected}
\end{figure}

\begin{figure}[h!]
  \centering
  \begin{tikzpicture}
  \begin{scope}[fill opacity=0.5]
      \draw[rotate=0,fill=green] (0,0) ellipse (80pt and 40pt);
  \draw[rotate=90,fill=red] (0,0) ellipse (80pt and 40pt);
  \draw[rotate=0,fill=blue] (0,-1) ellipse (80pt and 40pt);
  \draw[rotate=00,fill=yellow] (2,0) ellipse (110pt and 50pt);
  \end{scope}
\end{tikzpicture}
  \caption{Four Regions in the Plane  with All  Atoms Represented}
  \label{fig:four_regions_all_rep}
\end{figure}
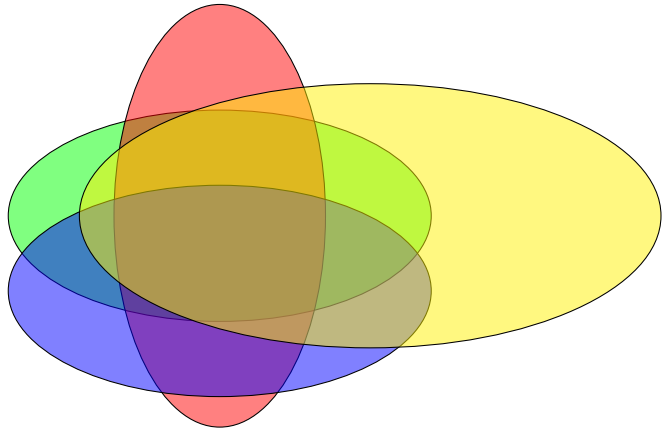

Despite this failure of genericity for the inflated rectangular
tilings it is still possible to use the dual linear programming
technique in this case. As already noted, the fusion polytope
is no longer in the positive orthant and the simplifications
associated with that property are not available. It is still possible,
however, to place a lower bound on the polytope as the following
result indicates, and the region need not be rectangular for this to
happen. The following notation will be useful:
\begin{equation}
  \label{eq:weps}
  D_\epsilon=\{\bx\in \bR^2:\|\bx-I^2\|<\epsilon\}
\end{equation}
and $I^2=[0,1]^2$ is the unit square. We can use any one of a number
of (convex) norms, the key features being that
\begin{equation}
  \label{eq:intersects_weps}
  D_\epsilon\cap \bigl((0,1)+D_\epsilon\bigr)\cap \bigl((1,0)+D_\epsilon\bigr)\cap
  \bigl((1,1)+D_e\bigr)\neq \emptyset.
\end{equation}
Also $\epsilon$ needs to be small enough to prohibit intersections of
the form \[D_\epsilon\cap \bigl((2,0)+D_\epsilon\bigr)\text{ or }D_\epsilon\cap \bigl((0,2)+D_\epsilon\bigr).\]
\begin{theorem}
  \label{thm:squares_minus_one}
  Let $T$ be a region in the plane that is a union of inflated squares
  based on the integer lattice. Thus
  \begin{equation}
    \label{eq:squares_union}
    T=\bigcup_{r=1}^R (m_r,n_r)+D_{\epsilon},
  \end{equation}
and let the sensor regions be $(m_r,n_r)+D_\epsilon$. Then no point in
the corresponding fusion polytope has a coordinate less than $-1$.
\end{theorem}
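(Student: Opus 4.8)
The goal is to show that every coordinate of every point of the fusion polytope $\cC(\bS)$ for the inflated-square configuration $T$ is at least $-1$. Since $\cC(\bS)$ is the feasible region $\{\by:A^T\by\le\bone\}$ of the dual problem, this amounts to showing that the only constraints $\ba\cdot\by\le 1$ (one per represented atom) already force $y_r\ge -1$ for each fixed sensor index $r$. The plan is to fix a sensor region $S_r=(m_r,n_r)+D_\epsilon$ and exhibit, among the atoms, enough constraints involving $y_r$ positively so that, combined with upper bounds $y_s\le 1$ on the other coordinates appearing, they yield $y_r\ge -1$. Concretely, I would pick the ``corner'' atom that lies in $S_r$ and in exactly one of its lattice neighbours — say $S_r\cap S_{r'}$ minus all other closures, where $S_{r'}=(m_r+1,n_r)+D_\epsilon$. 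One checks using \eqref{eq:intersects_weps} and the smallness condition on $\epsilon$ (which prohibits $D_\epsilon\cap((2,0)+D_\epsilon)$ etc.) that this set is non-empty, hence is a represented atom; its constraint reads $y_r+y_{r'}\le 1$, so $y_r\le 1-y_{r'}\le 2$. That is the wrong direction, so the real content must come from a constraint where $y_r$ appears with a \emph{negative}-style role — but in this polytope all coefficients are $0$ or $1$, so negativity of $y_r$ can only be bounded by using a constraint on an atom \emph{not} containing $S_r$ together with monotonicity.

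The correct route, I think, is the following. Suppose $y_r<-1$ for some feasible $\by$. Consider the atom $a$ that is the ``pure'' piece of $S_r$, namely the part of $S_r$ not in any neighbouring inflated square. In the inflated-square geometry this pure piece is non-empty (the inflation is small), so $\{r\}$-type atoms are represented and we merely get $y_r\le 1$ from it — again not enough. So instead one must look at a quadruple-overlap atom. Take the atom $b$ contained in the quadruple intersection $S_{ij}\cap S_{(i+1)j}\cap S_{i(j+1)}\cap S_{(i+1)(j+1)}$, where $S_r$ is one of these four, say $S_r=S_{ij}$. By \eqref{eq:intersects_weps} this quadruple intersection is non-empty; by Theorem \ref{thm:quad_intersect} at least two of its sixteen potential atoms are empty, but the \emph{top} atom — the one inside all four and outside everything else — is represented (it is the common overlap region, which is an open set bounded away from all other squares by the $\epsilon$-smallness). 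Its constraint is $y_{ij}+y_{(i+1)j}+y_{i(j+1)}+y_{(i+1)(j+1)}\le 1$. This gives $y_r\le 1-(\text{sum of the other three})$. Since each of those three other coordinates is $\le 1$, we only get $y_r\le 1-(\ge -3)$, useless again. The genuinely useful inequalities are those obtained by isolating $y_r$ on the \emph{large} side: for each atom $a$ containing $S_r$ we have $\sum_{s:\,S_s\supset a}y_s\le 1$; subtracting we get $-y_r\ge \sum_{s\ne r}y_s-1\ge -(k-1)-1$ where $k$ is the size of the overlap — still bounded below only by $-(k)$, and $k$ can be $4$, giving $y_r\ge -3$. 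To get the sharp $-1$ we must use that each neighbour $S_s$ of $S_r$ itself has a pure atom forcing $y_s\le 1$, and \emph{also} combine the double-overlap constraint $y_r+y_s\le 1$ over a well-chosen neighbour $s$, from which $y_r\ge$ something only if $y_s$ is bounded \emph{above}, which it is: $y_s\le 1$, giving $y_r\le 0$? No. The asymmetry is the crux.

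The honest summary, then: the main obstacle is precisely this sign-asymmetry — constraints of the form $\sum y\le 1$ bound coordinates from above trivially but from below only via cancellation. I expect the actual argument to run by contradiction: assume $y_r<-1$, then for the double-overlap atom in $S_r\cap S_s$ (with $s$ a lattice-neighbour of $r$) the constraint $y_r+y_s\le 1$ gives $y_s\le 1-y_r>2$; but $S_s$ has its own pure atom, forcing $y_s\le 1$, a contradiction. So the plan is: (i) fix $r$ with $y_r<-1$; (ii) produce a lattice-neighbour $s$ of $r$ (one always exists since $R\ge 2$ and the squares are lattice-adjacent, else $S_r$ would be redundant, violating irredundancy — or simply because $T$ is a connected union of more than one inflated square; the single-square case is trivial); (iii) verify that $S_r\cap S_s$ contains a represented atom, using \eqref{eq:intersects_weps} for non-emptiness and the $\epsilon$-smallness to exclude interference from squares two steps away; (iv) the constraint for that atom gives $y_r+y_s\le 1$; (v) verify $S_s$ contains its pure atom $S_s\setminus\bigcup_{s'\ne s}\overline{S_{s'}}$, which is non-empty by smallness of $\epsilon$, yielding the constraint $y_s\le 1$; (vi) combine: $-1>y_r\ge 1-y_s\ge 0$, contradiction. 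The only genuinely delicate point is step (iii)–(v), the geometric verification that the required atoms are non-empty for the given $T$ — but this is exactly the sort of ``small inflation'' check already made implicitly in the discussion of \eqref{eq:intersects_weps}, so it should go through routinely once the norm and $\epsilon$ are as specified.
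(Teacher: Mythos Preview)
Your final plan contains a fatal sign error. In step (vi) you write ``$-1>y_r\ge 1-y_s$'', but the constraint $y_r+y_s\le 1$ gives $y_r\le 1-y_s$, not $y_r\ge 1-y_s$. So the segment constraint yields no lower bound on $y_r$ at all, and your contradiction evaporates.

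This is not a repairable slip; it reflects a structural obstruction you yourself noticed earlier and then talked yourself out of. Every row of $A^{T}$ has non-negative entries, so for any index $r$ the direction $-\be_{r}$ lies in the recession cone of $\cC(\bS)=\{\by:A^{T}\by\le\bone\}$: if $\by$ is feasible then so is $\by-t\be_{r}$ for every $t>0$. Hence $y_{r}$ is genuinely unbounded below on the full polytope, and no argument using only the defining inequalities can produce a lower bound. The assertion must be read as a statement about \emph{extreme points}, and indeed that is exactly what the paper proves.

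The paper's argument supplies the missing idea: at an extreme point there is a basis of \emph{active} constraints, i.e.\ equalities $A_{1}\by=\bone$, and some row of $A_{1}$ must involve the coordinate $r$ (else that coordinate would be undetermined). One then does a case analysis on what kind of atom that active row comes from --- singleton, segment, corner, or square. In the first three cases the equality together with the singleton bounds $y_{s}\le 1$ on the remaining coordinates forces $y_{r}\ge -1$ (the corner case is the tight one: $y_{r}=1-y_{s}-y_{t}\ge 1-1-1=-1$). In the square case the other three coordinates form a corner, whose \emph{inequality} constraint bounds their sum by $1$, giving $y_{r}=1-(y_{s}+y_{t}+y_{u})\ge 0$. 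The crucial step you are missing is passing from $\le$ to $=$ via extremality; once you have an equality involving $y_{r}$, upper bounds on the other terms become lower bounds on $y_{r}$.
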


\begin{proof}
The constraints of the fusion polytope correspond to non-empty intersections
of the sensor regions, and these are of four kinds (only the
translating integer pairs are listed):
\begin{itemize}
  \item \emph{Squares}: $\sigma=\{(n, m), (n,m+1), (n+1,m), (n+1,m+1)\}$.
  \item \emph{Corners}: $\ell=\{(n, m), (n,m+1), (n+1,m)\}$ and all
    variants of this by rotations through $90^\circ$ and reflections.
  \item \emph{Segments}:$\gamma=\{(n, m), (n,m+1)\}$ and all
    rotations through $90^\circ$.
  \item \emph{Singletons:} $\alpha=\{(n,m)\}$.
\end{itemize}
Thus, for a square we obtain the constraint,
\begin{equation}
  \label{eq:constraint_square}
  x_{(n, m)}+x_{(n,m+1)}+x_{(n+1,m)}+x_{(n+1,m+1)}\leq 1,
\end{equation}
with similar constraints for the corners and segments. Singletons give rise
to the constraint $x_{(n,m)}\leq 1$.

Write $A$ for the matrix whose rows are obtained from the
constraints. The columns of $A$ are indexed by $r=1,\ldots, R$. The
fusion polytope is then $C_F = \{\bx\in \bR^2:A\bx\leq 1\}$.

 The extreme points of $C_F$ are all obtained as the unique
   solutions  of equations of the form $A_1 \bx=\bone$ where $A_1$ is
   obtained by choosing rows from $A$ that form a basis for $\bR^N$.
   In addition, of course,  $\bx\in C_F$, so $A_1^c\bx\leq \bone$ for
   the remaining rows $A_1^c$ of $A$. We write $W(A_1)$ for the
   collection of squares, corners, segments and singletons
   corresponding to the rows of $A_1$.

   Now suppose that the $(n_r,m_r)$th coordinate is less than
       $-1$ and let $R$ be a region in $W(A_1)$  containing
       $(n_r,m_r)$. Clearly, $R$ cannot be a singleton.  Neither can it
       be a segment or a corner since the sum over $R$ has to be $1$ and
       each $x_{(n,m)}\leq 1$. Suppose then that $R$ is a square. This
       would imply that the sum over the members of the square other
       than $(n_r,m_r)$ is at least $2$. But these form a corner and so
       the sum has to be less than $1$.   This provides a contradiction and
       proves the result.
\end{proof}

Even though $C_F$ is not compact we can still use its extreme points to
count. The following result follows quickly from standard results on
convex polyhedra.

 \begin{lemma}
     If $\bc$ is a vector in $\bR^N$ in the positive quadrant
         then
         \begin{equation}
           \label{eq:akjdb}
           \sup \{ \bc\cdot\bx:\bx\in C_F\}=\max\{\bc\cdot\be:\be\in \ext C_F\}
         \end{equation}
         where $\ext C_F$ is the set of extreme points.
     \end{lemma}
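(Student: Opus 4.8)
The plan is to reduce the claimed identity to a standard fact about linear programming over (possibly unbounded) polyhedra, namely that if a linear functional $\bc\cdot\bx$ is bounded above on a pointed polyhedron $Q$, then its supremum is attained at an extreme point of $Q$. The only thing that needs checking is that the relevant hypotheses hold for $Q=C_F$ and for the specific class of functionals $\bc$ that arise here (those in the positive quadrant). First I would recall that $C_F$ is defined by finitely many inequalities $A\bx\le\bone$ and is therefore a polyhedron; it is pointed (line-free) because, among the constraint rows, the singleton constraints $x_{(n,m)}\le 1$ hold for every coordinate, so $C_F$ contains no line — equivalently the recession cone $\{\bx:A\bx\le 0\}$ contains no one-dimensional subspace, since it is contained in the nonpositive orthant $\{x_{(n,m)}\le 0\text{ for all }(n,m)\}$, which is pointed. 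Hence $C_F$ has at least one extreme point, and $\ext C_F$ is finite and nonempty.

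Next I would verify boundedness of the objective. By Theorem~\ref{thm:squares_minus_one} every coordinate of every point of $C_F$ is at least $-1$, and every singleton constraint gives $x_{(n,m)}\le 1$, so $C_F$ is in fact contained in the cube $[-1,1]^N$ and is bounded; a fortiori $\bc\cdot\bx$ is bounded on $C_F$ for any $\bc$. (One does not even need the positivity of $\bc$ for boundedness here, though positivity is a natural hypothesis to state since it is what is used when $\bc$ is a sensor-measurement vector.) With $C_F$ pointed and the objective bounded above, the standard resolution theorem for polyhedra — every point of $C_F$ is a convex combination of its extreme points plus a nonnegative combination of recession directions — gives
\[
\sup\{\bc\cdot\bx:\bx\in C_F\}
=\max\{\bc\cdot\be:\be\in\ext C_F\}
+\sup\{\bc\cdot\br:\br\in \operatorname{rec}(C_F)\},
\]
and since $C_F$ is bounded the recession cone is $\{\bzero\}$, so the last term vanishes and the identity follows. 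In fact the cleanest route, once boundedness is in hand, is simply: a bounded polyhedron is a polytope (Minkowski--Weyl), hence the convex hull of its finitely many extreme points, and a linear functional on a convex hull of a finite set attains its maximum at one of those points.

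The main obstacle — and it is a mild one — is justifying that $C_F$ is pointed / has extreme points at all, since the lemma is phrased for the genuinely noncompact $C_F$ of the preceding discussion; the resolution is exactly the observation above that the singleton rows $x_{(n,m)}\le 1$ are present for every coordinate, which together with the lower bound $-1$ from Theorem~\ref{thm:squares_minus_one} pins $C_F$ inside $[-1,1]^N$. So the "standard results on convex polyhedra" invoked are: a polyhedron that is bounded is the convex hull of its (finitely many) extreme points, and a linear functional over such a set is maximized at an extreme point. Everything else is bookkeeping.
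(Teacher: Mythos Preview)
Your overall strategy --- pointed polyhedron plus the resolution theorem --- is exactly the standard route the paper gestures at, and your observation that the singleton constraints $x_{(n,m)}\le 1$ force the recession cone of $C_F$ into the nonpositive orthant is correct and is the key structural fact. But there is a genuine gap: your claim that $C_F\subset[-1,1]^N$ is bounded is false. The recession cone of $C_F$ is the \emph{entire} nonpositive orthant, since every defining inequality is of the form $\sum_{r\in\sigma} x_r\le 1$ with nonnegative coefficients, so any $\bd\le\bzero$ satisfies $A\bd\le\bzero$. Concretely, $(1,-M,0,\ldots,0)\in C_F$ for every $M>0$. Theorem~\ref{thm:squares_minus_one} does not save you here: despite its wording (``no point''), its proof establishes the bound $-1$ only for \emph{extreme} points of $C_F$, and indeed the bound fails for general points by the example just given.

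The repair is immediate and is precisely where the hypothesis $\bc\ge\bzero$ enters --- contrary to your parenthetical remark that positivity is inessential. Since every recession direction $\br$ satisfies $\br\le\bzero$ componentwise and $\bc\ge\bzero$, one has $\bc\cdot\br\le 0$ for all $\br\in\operatorname{rec}(C_F)$. Hence in the decomposition $\bx=\sum_i\lambda_i\be_i+\br$ with $\be_i\in\ext C_F$, $\sum\lambda_i=1$, $\lambda_i\ge 0$, one gets $\bc\cdot\bx\le\max_i\bc\cdot\be_i$, and equality is achieved at any extreme point realizing the maximum. So the supremum is finite, attained, and equals the maximum over $\ext C_F$ --- not because the recession cone is trivial, but because $\bc$ annihilates it from above. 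Without $\bc\ge\bzero$ the supremum is $+\infty$.
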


\subsection{Hexagons}
\label{sec:hex}

If instead of using inflated rectangular tilings, we employ inflated
hexagonal tilings, the genericity problems disappear. Let $H$ be the regular hexagon
centred at the origin in $\bR^2$ and with vertices at
$(0,\pm 1)$,$(\pm \frac{1}{2},\pm\frac{\sqrt{3}}{2})$ and consider
 translates of this via the hexagonal lattice.
These tesselate the plane in a honeycomb arrangement.  Slight
inflations of them have at most $3$-fold intersections and so do not
suffer the problems of the rectangular tilings. In this case for a
cover of some region in the plane by translates of these inflated
hexagons the fusion polytope lies in the positive orthant. Even in
this case, though, for large numbers of such inflated hexagons the
fusion polytope can become very complicated and its number of
extreme points large. We illustrate this with a few examples computed
using the polymake package (\texttt{http://www.polymake.org/doku.php}). This is a topic we intend to return to in
a later paper.

Consider the  tesselation of a region of the plane in
Figure~\ref{fig:31hex}. In fact, of course, we are interested in
slight inflations of these hexagons so that the regions overlap. There
are obvious triple intersections but no quadruple intersection.

\input{white_31_regions}

This, incidentally, is almost as large a collection of hexagons as we are
reasonably able to handle on a laptop computer in the space of a few
days using  \texttt{polymake}.
According to polymake, the fusion polytope for the
collection of 30 hexagons depicted here has nearly 5000 dominant extreme points. Many,
in fact most, of these are extreme points involving only $0$ (white),
and $1$ (blue)
coefficients such as the one in Figure~\ref{fig:white_blue_31hex}. The
rule for generating such (dominant) extreme points is fairly simple. No two
adjacent hexagons can have a coefficient $1$, and the number of $1$s
is maximal subject to this constraint.

More interesting are extreme points with coefficients of $1/2$ (yellow) and
$0$ (white), such as in Figure~\ref{fig:white_yellow_31hex}.

\input{white_yellow_31_regions}

Here again it would not be too hard to develop a description of the
patterns. It is clear that no three hexagons that meet at a point can
all have coefficient $1/2$. Moreover, and these appear to be related
to the result \ref{thm:1} \cite{groetschel88:_geomet_algor_combin_optim}], every example of this kind involves a loop with an
odd number of hexagons.  There are, however, other possibilities for
extreme points.  One rather obvious one is a
mixture of the preceding two types as is illustrated in
Figure~\ref{fig:whiteblueyellow_hex}.

More exotic extreme points also arise. To illustrate, consider
Figures~\ref{fig:greyblackyellow} (where grey represents $\frac{1}{4}$
and black $\frac{3}{4}$, and yellow, as before, represents
$\frac{1}{2}$), and Figures~\ref{fig:whitegreenred} (where green represents
$\frac{1}{3}$ and red $\frac{2}{3}$). Coefficients other than
$0,1,\frac{1}{2}, \frac{1}{3},\frac{2}{3}, \frac{1}{4}, \frac{3}{4}$
have also been observed. We will discuss these patterns in much more detail in a later paper.
\input{greyblackyellow_30regions}

\section{Bounds and Special Types of Regions}
\label{sec:bounds}

We consider here only (irredundant, covering,) generic sensor
configurations. These are entirely represented by their nerves and so
are discussed in those terms. Accordingly, we fix a simplicial complex
$\Sigma$ which is the nerve of a sensor configuration $\bS=(S_1,S_2,\ldots,
S_R)$.  As before,  $\minval(\Sigma,\bn)$ denotes the minimum value
associated with this configuration and the sensor measurement  vector $\bn$.

The \emph{$k$-skeleton}  of $\Sigma$ is defined to be the subcomplex of $\Sigma$
consisting of those subsets of size $\leq k+1$ (or in other words simplices of
dimension $\leq k$ and denoted by $\Sigma^k$.
We write $\Sigma_{k}$ for  the simplicial complex obtained from
$\Sigma$ by inserting a new simplex whenever $\Sigma$ contains all
of the $k$-faces of that simplex. In other words, if every subset of $\sigma=(i_1,i_2,
\ldots, i_r)$ ($1\leq i_n \leq R$) of size $k+1$ belongs to $\Sigma$ then
$\sigma\in \Sigma$.

The following result is  a straightforward observation. Recall that
genericity permits us to consider the fusion polytope as a subset of
the positive orthant.
\begin{theorem}
\begin{equation}
  \label{eq:33}
  \cC(\Sigma_{1})\subset  \cC(\Sigma_{k}) \subset \cC(\Sigma)
\subset \cC(\Sigma^{k}) \subset \cC(\Sigma^{1}),
\end{equation}
and so for any measurement vector $\bn$,
\begin{equation}
  \label{eq:34}
 \max\bn= \minval(\Sigma_{1},\bn)\leq   \minval(\Sigma_{k},\bn)\leq   \minval(\Sigma,\bn)\leq
  \minval(\Sigma^{k},\bn)\leq   \minval(\Sigma^{1},\bn)= \bone.\bn.
\end{equation}
\end{theorem}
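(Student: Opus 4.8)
The plan is to reduce everything to one monotonicity principle: enlarging a nerve adds inequalities to its fusion polytope, hence shrinks the polytope, hence lowers $\minval$. So it suffices to order the five complexes by inclusion and then identify the two extreme members. \textbf{Step 1 (nesting of complexes).} For $1\le k$ I would first check
\[
\Sigma^{1}\subseteq\Sigma^{k}\subseteq\Sigma\subseteq\Sigma_{k}\subseteq\Sigma_{1}
\]
as collections of simplices. The two skeleton inclusions are immediate, since the $j$-skeleton is by definition the subcomplex of all faces of size $\le j+1$ and $2\le k+1$. The inclusion $\Sigma\subseteq\Sigma_{k}$ holds because saturation only inserts new simplices. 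For $\Sigma_{k}\subseteq\Sigma_{1}$: if $\sigma$ is a simplex inserted into $\Sigma_{k}$, then every $(k+1)$-element subset of $\sigma$ lies in $\Sigma$; since $\Sigma$ is closed under faces and $k\ge1$, every $2$-element subset of $\sigma$ is a face of one of those $(k+1)$-element subsets and so also lies in $\Sigma$, whence $\sigma$ is inserted into $\Sigma_{1}$ as well. The same face-closure argument shows $\Sigma_{k}$ is a bona fide simplicial complex.

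\textbf{Step 2 (antitone passage to the fusion polytope).} Using the description $\cC(\Sigma)=\{\by:\ y_{r}\ge0,\ \sum_{r\in\sigma}y_{r}\le1\ (\sigma\in\Sigma)\}$ of \eqref{eq:adadoob}, any inclusion $\Sigma'\subseteq\Sigma''$ makes every defining inequality of $\cC(\Sigma')$ one of the defining inequalities of $\cC(\Sigma'')$, so $\cC(\Sigma'')\subseteq\cC(\Sigma')$; applying this along the chain of Step~1 gives \eqref{eq:33}. \textbf{Step 3 (monotonicity of $\minval$).} Because the configuration is generic, Theorems~\ref{thm:dual_result} and \ref{thm:generic} identify $\minval(\Sigma,\bn)$ with $\max\{\bn\cdot\bc:\ \bc\in\cC(\Sigma)\}$, the maximum of a fixed linear functional over the fusion polytope. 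Maximising that functional over a larger feasible set cannot decrease the optimum, so each inclusion $\cC(\Sigma')\subseteq\cC(\Sigma'')$ gives $\minval(\Sigma',\bn)\le\minval(\Sigma'',\bn)$; this yields all the inequalities of \eqref{eq:34}.

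\textbf{Step 4 (the two endpoints).} These record the universal trivial bounds. For the lower one I would observe that $\be_{r}\in\cC(\Sigma)$ for every nerve (over a simplex $\sigma$ the constraint reads $1\le1$ or $0\le1$), so $\minval(\Sigma,\bn)\ge\max_{r}n_{r}$ always; and once the complex is saturated up to the full simplex on $\{1,\ldots,R\}$ the only maximal constraint is $\sum_{r}y_{r}\le1$, so $\cC$ is the standard simplex with extreme points $\bzero,\be_{1},\ldots,\be_{R}$ and hence $\minval=\max_{r}n_{r}=\max\bn$. For the upper one, $\bone\cdot\bn$ is already the overall maximum (attained by emptying every overlap), and it is the value of the LP against the largest fusion polytope that can occur — the unit cube $[0,1]^{R}$ of the edgeless nerve, which contains $\bone$ since $\bn\ge0$ — so this end of the chain records exactly the bound $\bone\cdot\bn$. \textbf{Main obstacle.} Nothing here is deep — the paper rightly calls it a straightforward observation — so the only points that repay care are the combinatorial inclusion $\Sigma_{k}\subseteq\Sigma_{1}$ together with the check that $\Sigma_{k}$ is a simplicial complex, and the reminder that genericity is precisely what makes the positive-orthant form of linear-programming duality (used in Step~3) applicable.
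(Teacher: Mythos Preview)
Your proof is correct and supplies all the details the paper omits --- the paper simply labels this ``a straightforward observation'' and gives no argument. Your monotonicity reduction (larger complex $\Rightarrow$ more constraints $\Rightarrow$ smaller polytope $\Rightarrow$ smaller $\minval$) is exactly the right mechanism, and Steps~1--3 carry it out cleanly.

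One point worth flagging concerns the indexing convention at the endpoints. The paper's stated definition of $\Sigma^{k}$ (subsets of size $\le k+1$) would make $\Sigma^{1}$ the $1$-skeleton, i.e.\ vertices \emph{and} edges; under that reading the endpoint equality $\minval(\Sigma^{1},\bn)=\bone\cdot\bn$ is generally false, and likewise $\Sigma_{1}$ would be the clique complex rather than the full simplex, so $\minval(\Sigma_{1},\bn)=\max\bn$ would fail too. However, the paper's remark immediately after the theorem that ``$\Sigma^{2}$ is a graph'' reveals that the intended convention is subsets of size $\le k$: then $\Sigma^{1}$ is the bare vertex set (fusion polytope $=[0,1]^{R}$, giving $\bone\cdot\bn$) and $\Sigma_{1}$ is the full simplex on $\{1,\dots,R\}$ (fusion polytope $=$ standard simplex, giving $\max\bn$), exactly as you use in Step~4. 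Your Step~1, though, invokes the ``size $\le j+1$'' definition when checking $\Sigma^{1}\subseteq\Sigma^{k}$ and argues $\Sigma_{k}\subseteq\Sigma_{1}$ via $2$-element faces of $(k+1)$-element subsets. Both chains of inclusions hold under either convention, but you should pick one and use it consistently; the intended one makes Step~4 trivial and slightly simplifies Step~1 as well.
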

Of course $\Sigma^2$ is a graph and so the techniques discussed in
Section~\ref{sec:examples} can be applied to the calculation of
$\minval(\Sigma^2,\bn)$.

Some kinds of restrictions on shapes make calculating and bounding the minimal
count easier. When the regions are convex subsets of $\bR^{n}$,
Helly's Theorem is effectively as follows.
\begin{theorem}[\cite{helly23:_ueber_mengen_koerp_punkt}]
\label{thm:helly}
If $C_{1}, C_{2}, \ldots, C_{R}$ are convex subsets of $\bR^{n}$ with the
property that $C_{i_{1}}\cap C_{i_{2}}\cap\cdots\cap C_{i_{n+1}}\neq \emptyset$
for all choices of $i_{k}$, then $C_{1}\cap C_{2}\cap\cdots \cap C_{R}\neq
\emptyset$.
\end{theorem}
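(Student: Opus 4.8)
The plan is to deduce this classical statement from Radon's partition lemma by induction on the number $R$ of convex sets. Throughout I take the implicit standing hypothesis $R\geq n+1$, so that the assumption on $(n+1)$-fold intersections is meaningful; this also makes the base case of the induction well-defined.

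First I would establish \emph{Radon's Lemma}: any $n+2$ points $p_1,\dots,p_{n+2}$ in $\bR^n$ admit a partition of their index set as $I\sqcup J$ with $\co\{p_i:i\in I\}\cap\co\{p_j:j\in J\}\neq\emptyset$. This is pure linear algebra: the $n+2$ vectors $(p_k,1)\in\bR^{n+1}$ are linearly dependent, so there are scalars $\lambda_k$, not all zero, with $\sum_k\lambda_k p_k=0$ and $\sum_k\lambda_k=0$. Put $I=\{k:\lambda_k>0\}$, $J=\{k:\lambda_k\leq 0\}$, and $\lambda=\sum_{i\in I}\lambda_i=-\sum_{j\in J}\lambda_j>0$; then $q=\sum_{i\in I}(\lambda_i/\lambda)\,p_i=\sum_{j\in J}(-\lambda_j/\lambda)\,p_j$ exhibits a common point of the two hulls.

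Next I would run the induction on $R$. The base case $R=n+1$ is exactly the hypothesis, since the only $(n+1)$-element subfamily is the whole family. For $R\geq n+2$, assuming the result for $R-1$ sets, I apply the inductive hypothesis to each subfamily $\{C_j:j\neq i\}$ — every $(n+1)$-fold subintersection of it is nonempty because it is a subintersection of the original family — to pick a point $p_i\in\bigcap_{j\neq i}C_j$. This yields $R\geq n+2$ points $p_1,\dots,p_R$, so Radon's Lemma supplies a partition $\{1,\dots,R\}=I\sqcup J$ and a point $q\in\co\{p_i:i\in I\}\cap\co\{p_j:j\in J\}$. I claim $q\in\bigcap_{k=1}^R C_k$: fix $k$; since $I,J$ partition the indices, $k$ lies outside one of them, say $k\notin I$. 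Then $i\neq k$ for every $i\in I$, hence $p_i\in C_k$ by the choice of $p_i$; by convexity $\co\{p_i:i\in I\}\subseteq C_k$, so $q\in C_k$. The symmetric argument covers the case $k\notin J$, completing the induction.

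The only genuinely nontrivial ingredient is Radon's Lemma, and even there the content is just the affine dependence of $n+2$ points in $\bR^n$ together with a sign bookkeeping; I expect the splitting of the Radon coefficients and the verification that $q$ lands in the ``missing'' set $C_k$ to be the step most worth writing out carefully. No closedness, boundedness, or topological regularity of the $C_k$ is used anywhere, so the statement holds for arbitrary convex sets exactly as stated.
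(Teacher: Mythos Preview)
Your argument is correct and is the standard modern derivation of Helly's theorem via Radon's lemma. However, the paper does not give its own proof of this statement: it is quoted as a classical result with a citation to Helly's original article and then applied to obtain the subsequent corollary about the nerve $\Sigma$ of a convex sensor configuration. So there is nothing to compare against; your supplied proof simply fills in what the paper leaves to the literature.
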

In terms of the simplicial complex associated with the sensor regions
this yields the following property.

\begin{theorem}
  \label{thm:simphelly}
  Let $\bS=(S_{1},S_{2},\ldots, S_{R})$ be a generic
  sensor configuration for which the sensor regions are convex subsets of
  $\bR^{N}$. Let $\Sigma$ be the corresponding simplicial complex. Then
   $\Sigma$ contains all of the $N$-dimensional faces of a simplex it must also contain
  the simplex. In other words $\Sigma_N=\Sigma$.
\end{theorem}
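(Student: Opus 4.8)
The plan is to deduce the statement directly from Helly's Theorem (Theorem~\ref{thm:helly}). Since $\Sigma\subseteq\Sigma_N$ holds by construction, everything comes down to the reverse inclusion, which unwinds to the following assertion: if $\sigma=\{i_1,i_2,\ldots,i_r\}$ is a subset of $\{1,\ldots,R\}$ every $N$-dimensional face of which already lies in $\Sigma$, then $\sigma\in\Sigma$ as well. For $\sigma$ to possess $N$-dimensional faces at all we need $\dim\sigma\ge N$, i.e. $r\ge N+1$, so these are exactly the simplices that the $N$-skeleton construction can adjoin, and this is the range we must treat.

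The steps are as follows. First I would dispose of the boundary case $r=N+1$: here $\sigma$ is its own unique $N$-dimensional face, so the hypothesis already says $\sigma\in\Sigma$ and there is nothing to do. Thus assume $r\ge N+2$. Next I would use the standing genericity hypothesis, which guarantees that $\Sigma$ is faithfully the nerve of $\bS$: for any $\tau\subseteq\{1,\ldots,R\}$ one has $\tau\in\Sigma$ if and only if $\bigcap_{j\in\tau}S_j\ne\emptyset$. Under this dictionary the hypothesis that every $N$-dimensional face of $\sigma$ lies in $\Sigma$ becomes precisely the statement that every $N+1$ of the convex sets $S_{i_1},S_{i_2},\ldots,S_{i_r}\subseteq\bR^N$ have a common point. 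Finally I would apply Helly's Theorem to this finite family of convex subsets of $\bR^N$; it yields $\bigcap_{k=1}^{r}S_{i_k}\ne\emptyset$, hence $\sigma\in\Sigma$. As $\sigma$ was arbitrary, $\Sigma_N\subseteq\Sigma$, and therefore $\Sigma_N=\Sigma$.

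I do not expect a genuine obstacle: the mathematical content is a single invocation of Helly, and the convexity and the ambient dimension $N$ are exactly the data that theorem consumes. The only point requiring a moment's care is the bookkeeping around the indexing conventions --- ensuring that Helly (as phrased, where one selects $N+1$ distinct indices) is invoked only after reducing to $r\ge N+2$, and that the $N$-skeleton construction is read as adjoining only simplices that actually have $N$-dimensional faces, so that the trivial range $r\le N$ does not appear to contradict $\Sigma_N=\Sigma$.
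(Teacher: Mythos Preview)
Your proposal is correct and matches the paper's approach: the paper does not spell out a proof but presents Theorem~\ref{thm:simphelly} as the immediate translation of Helly's Theorem into the nerve language, which is exactly what you do. One small remark: you need not invoke genericity to get the equivalence $\tau\in\Sigma \iff \bigcap_{j\in\tau}S_j\neq\emptyset$, since that is simply the definition of the nerve $\Sigma(\bS)$; genericity in the paper concerns the bijection between simplices and atoms, not the nerve condition itself.
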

In particular, if  the sensor regions are in the plane $\bR^2$ then $\Sigma_2=\Sigma$.

Another situation in which it is possible to make further progress
concerns a generic collection of sensor regions with the property that every
sensor region contains a ``uniformly maximal dimensional
intersection''; in other words, there is some $M$ such that for every
$r$, $$S_r\supset S_{i_1}\cap S_{i_2}\cap \ldots \cap
S_{i_{M+1}}\neq \emptyset,$$ but there are no non-trivial non-empty
$M+2$ intersections.   This amounts to a ``manifold''
assumption on the nerve; namely, that every simplex is a subset of one
of maximal dimension $M$. In this case, in the dual formulation of the
linear programming problem, the only inequalities that need to be
considered are the ones involving the maximal dimensional simplices.
In this case, $\cC(\Sigma)=\cC(\Sigma_M)$, but more importantly, the
linear programming problem  is significantly simplified because of the
reduced number of inequalities. To illustrate this we present the
following example.

We describe several generic sensor configurations in terms of the
simplicial complex formulation. Let $\sigma_a=\{a,b,c,d\}$,
$\sigma_r=\{r,s,t,u\}$, and $\sigma_w=\{w,x,y,z\}$ be three generic
tetrahedra given in terms of their vertices in $\bR^3$.

\begin{example}
\label{ex:44dd}
\begin{enumerate}
\item First consider the sensor configuration comprising $\sigma_a$ and
$\sigma_r$ with $a=r$. In this case all extreme points have
coordinates equal to either $0$ or $1$. Only one vertex of each
tetrahedron can be assigned a $1$. This is the only restriction,
resulting in 10 (dominant)~extreme points.
\item Now consider the case where the three tetrahedra are attached in a
string:
$a=r$ and $u=w$.
As the same as the previous case, all extreme points have coordinates
equal to either $0$ or $1$.
\item Next, suppose that the three tetrahedra are connected in a
  cycle: $a=r$ and $u=w$, $x=b$. Then, in addition to the obvious
  $0,1$ extreme points,   there is one $0,\frac12$ extreme point with
  values $\frac12$ at $a$, $u$, and $x$.
\item Now suppose that the tetrahedra are connected along edges, say,
  $(a,b)=(r,s)$ and $(t,u)=(w,x)$. then again all extreme points are
  $0,1$-valued.
\item Finally consider the case $(a,b)=(r,s)$, $(t,u)=(w,x)$,
  $(y,z)=(c,d)$. In this case there also  $0,\frac12$-valued extreme
  points with supports triangles with one edge in each tetrahedron.
\end{enumerate}

Surprisingly, even if more and more tetrahedra are glued together in
such a way, the coordinates of all extreme points still only can be
$0$ or $1$, although the number of them increases quickly. We will
discuss these conclusions in much more detail in a later paper.
\end{example}

\section{Conclusion}

This paper introduces methods for obtaining the limits on fusion of
simple measurements from multiple sensors. The problem is
formulated as one in linear programming and dualized to obtain an
object, called the fusion polytope, which is the key device for
calculation of the minimum fused value compatible with the data. The
fusion polytope is computed in some simple cases. It is shown that
when the sensor configuration satisfies a simple property, defined in
the paper as \emph{genericity}, this fusion polytope can be assumed to
be in the positive orthant.

It is also shown that under some mild hypotheses genericity is not
satisfied when there are four overlapping sensor regions in the
plane. Consideration is given to the case of overlapping regions in a
hexagonal tiling in the plane. This is a generic situation, but simulations have shown
that it still leads to very complex extreme point structures.

We believe that the ideas described here are capable of being extended to more
complex situations where data fusion is required.


\section*{Acknowledgment}
This work was supported by the US Defense Advanced Research Projects
Agency (DARPA) under grants Nos. 2006-06918-01--03,
by the US Air Force Office of Scientific Research (AFOSR) under grant No.
FA2386-13-1-4080, and by the  Institute for Mathematics and its
Applications (IMA).

\bibliographystyle{IEEEtran}
\bibliography{counting_refs}


\end{document}